\documentclass[numbers]{sigplanconf}
\usepackage[utf8]{inputenc}
\usepackage{color,xspace}
\usepackage{amsmath}
\usepackage{amssymb}
\usepackage{amsthm}
\usepackage{stmaryrd}
\usepackage{flushend}

\setlength{\pdfpageheight}{\paperheight}
\setlength{\pdfpagewidth}{\paperwidth}

%\usepackage{courier}            % standard fixed width font
%\usepackage[scaled]{helvet} % see 
%www.ctan.org/get/macros/latex/required/psnfss/psnfss2e.pdf
\usepackage{url}                  % format URLs
\usepackage{listings}          % format code
\lstset{
  numbers=left,
  xleftmargin=1.5em,framexleftmargin=1.5em,
  firstnumber=1,
  numberfirstline=true,
  framesep=4.5mm,
%   framexleftmargin=2.5mm,
  numberstyle=\normalfont\scriptsize,
  language=C,
  showspaces=false, 
  showstringspaces=false, 
  tabsize=1,
  columns=fullflexible,
  morekeywords={then},
  basicstyle=\small,
  escapeinside=@@
}
\usepackage{enumerate}
\usepackage{paralist}
\usepackage{caption}
\usepackage{subcaption}
\usepackage{multirow}
\usepackage{graphicx}

\usepackage{paralist}

\usepackage{float}

\toappear{}

% The next three lines to remove copyright box on title page
%\makeatletter
%\def\@copyrightspace{\relax}
%\makeatother
% End remove copyright box

%CacheAudit, CacheOutflow, CacheCollateral, CacheLoss, CacheMachine, Leakalyzer

\definecolor{darkred}{rgb}{0.6,0,0}

\newcommand{\gorancomment}[1]{}
\newcommand{\boriscomment}[1]{}
\newcommand{\TODO}[1]{}

%%% Programs and Computations
\newcommand{\cnt}{\mathit{cnt}^\projgen}

\newcommand{\prog}{P\xspace} % program under consideration
\newcommand{\Trans}{\ensuremath{\mathcal{R}}} % set of program transitions
\newcommand{\Traces}{{\mathit{Traces}}}
\newcommand{\nextop}{\ensuremath{\mathop{\mathit{next}}\nolimits}}
\newcommand{\absnext}{\ensuremath{\mathop{\abs{\mathit{next}}}\nolimits}}
%%% Sates
\newcommand{\States}{\ensuremath{\Sigma}} % set of program states
\newcommand{\StatesInit}{\ensuremath{I}} % set of initial states
%\newcommand{\StatesFinal}{F} % set of final states

%%% Information Flow
\newcommand{\Hi}{\mathit{hi}}
\newcommand{\StatesInitHi}{\StatesInit_\Hi}
\newcommand{\Lo}{\mathit{lo}}
\newcommand{\StatesInitLo}{\StatesInit_\Lo}

\newcommand{\Trace}{\ensuremath{{\mathcal T}}}
\newcommand{\MSym}{\ensuremath{{\mathcal M}}}

\newcommand{\state}{\sigma} % state
\newcommand{\statelo}{\lambda}
\newcommand{\stateloext}{{\bar{\statelo}}}

\newcommand{\collSemLo}{\collSem_{\statelo}}

\newcommand{\maskop}{\odot}

\newcommand{\symbmapping}{\lambda}
\newcommand{\loSym}{\Sym_\Lo}

\newcommand{\Addresses}{\mathcal{A}}
\newcommand{\blocktrace}{\text{\em block}}
\newcommand{\addrtrace}{\text{\em address}}
\newcommand{\weaktrace}{\text{\em b-block}}
\newcommand{\banktrace}{\text{\em bank}}
\newcommand{\pagetrace}{\text{\em page}}

\newcommand{\view}{\mathit{view}}
\newcommand{\viewaddr}{\mathit{view}^\addrtrace}
\newcommand{\viewblock}{\mathit{view}^\blocktrace}
\newcommand{\viewweakblock}{\mathit{view}^\weaktrace}
\newcommand{\viewbank}{\mathit{view}^\banktrace}
\newcommand{\viewpage}{\mathit{view}^\pagetrace}

%%% Events
\newcommand{\Events}{\ensuremath{\Addresses^*}}
\newcommand{\evt}{a} % events

 % effect (function producing events)

 % cache hit
 % cache miss
 % nothing

%%% Update Functions
\newcommand{\upd}{\mathit{upd}}

%%% Abstract Interpretation
\newcommand\abs[1]{{#1}^\sharp}

\newcommand{\AbsMS}{\abs{\MSym}}
\newcommand{\AbsTR}{\abs{\Trace}}

\newcommand{\conc}{\gamma}
\newcommand{\concTR}{\conc^{\AbsTR}}
\newcommand{\concMS}{\conc^{\AbsMS}}

\newcommand{\collSem}{\mathit{Col}} %Program is implicit
\newcommand{\Abscoll}{\abs{\collSem}}

 %the set of prefix-closed traces

% fixpoints

%% Operators
\newcommand{\sizeof}[1]{|#1|} %# is already meaning abstract

\newcommand\range[1]{\mathop{ran}(#1)}
\newcommand\domain[1]{\mathop{dom}(#1)}
\newcommand{\partsof}[1]{\mathcal{P}(#1)}    % powerset

%%% Language
\newcommand{\bitlen}{n}

\newcommand{\constsym}{s}
\newcommand{\mask}{m}

\newcommand{\Sym}{\mathit{Sym}}
\newcommand{\Bitvecs}{\{0,1\}^\bitlen}
\newcommand{\Masks}{\{0,1,\top\}^\bitlen}
\newcommand{\absms}{\abs{\ms}}
\newcommand{\abstr}{\abs{\tr}}
\newcommand{\absabs}{\abs{a}}
\newcommand{\ms}{x}
\newcommand{\tr}{t}

\newcommand{\Leakage}{\mathcal{L}}

\newcommand{\logesize}{b}

\newcommand{\compsteps}{k}

\newcommand{\projgen}{\pi}
\newcommand{\proj}{\projgen_{\bitlen:\logesize}}

\newcommand{\Ext}{\mathit{Ext}}
\newcommand{\Labl}{L}
\newcommand{\Rept}{R}
\newcommand{\Nodes}{V}
\newcommand{\node}{v}
\newcommand{\parent}{u}
\newcommand{\Edges}{E}
\newcommand{\rt}{r}

% Offsets
\newcommand{\origin}{\ensuremath{\mathit{orig}}}
\newcommand{\suck}{\ensuremath{\mathit{succ}}}
\newcommand{\offset}{\ensuremath{\mathit{off}}}
\newcommand{\intoff}{c}
\newcommand{\nosuccessor}{\bot}

% Operations on Masked Symbols
\newcommand{\ttop}{\mathop{\texttt{OP}}}
\newcommand{\ttxor}{\mathop{\texttt{XOR}}}
\newcommand{\ttor}{\mathop{\texttt{OR}}}
\newcommand{\ttand}{\mathop{\texttt{AND}}}
\newcommand{\ttadd}{\mathop{\texttt{ADD}}}
\newcommand{\ttcmp}{\mathop{\texttt{CMP}}}
\newcommand{\ttsub}{\mathop{\texttt{SUB}}}

\newcommand{\absop}{\mathop{\abs{\texttt{OP}}}}

\newcommand{\absxor}{\mathop{\abs{\texttt{XOR}}}}
\newcommand{\absadd}{\mathop{\abs{\texttt{ADD}}}}
\newcommand{\abssub}{\mathop{\abs{\texttt{SUB}}}}
\newcommand{\abscmp}{\mathop{\abs{\texttt{CMP}}}}

% Theorem environments
\newtheorem{example}{Example}
\newtheorem{proposition}{Proposition}
\newtheorem{theorem}{Theorem}
\newtheorem{lemma}{Lemma}

\newif\ifext
%\extfalse
\exttrue

\title{Rigorous Analysis of Software Countermeasures against Cache
  Attacks }

\authorinfo{Goran Doychev \and Boris K{\"o}pf}
           {IMDEA Software Institute}
           {\{goran.doychev,boris.koepf\}@imdea.org}

\begin{document}

\maketitle

\begin{abstract}
  CPU caches introduce variations into the execution time of programs
  that can be exploited by adversaries to recover private information
  about users or cryptographic keys.

  Establishing the security of countermeasures against this threat
  often requires intricate reasoning about the interactions of program
  code, memory layout, and hardware architecture and has so far only
  been done for restricted cases.

  In this paper we devise novel techniques that provide support for
  bit-level and arithmetic reasoning about memory accesses in the presence of
  dynamic memory allocation. These techniques
  enable us to perform the
  first rigorous analysis of widely deployed software countermeasures
  against cache attacks on modular exponentiation, based on executable
  code.
\end{abstract}

% ACM Stuff
%\begin{CCSXML}
%<ccs2012>
%<concept>
%<concept_id>10002978.10003022.10003023</concept_id>
%<concept_desc>Security and privacy~Software security engineering</concept_desc>
%<concept_significance>500</concept_significance>
%</concept>
%</ccs2012>
%\end{CCSXML}
%
%\ccsdesc[500]{Security and privacy~Software security engineering}
%
%\keywords Side channel attacks, Countermeasures, Caches

\section{Introduction}
CPU caches reduce the latency of memory accesses on average, but not
in the worst case. Thus, they introduce variations into the execution
time of programs, which can be exploited by adversaries to recover secrets, such as
private information about users or cryptographic
keys~\cite{Bernstein05cache-timingattacks,osvikshamir06cache,AciicmezSK07,RistenpartTSS09,
	GullaschBK11,YaromF14}.

A large number of techniques have been proposed to counter this
threat. Some proposals work at the level of the operating
system~\cite{erlingsson2007operating,KimP12,zhang2013duppel}, others
at the level of the hardware
architecture~\cite{intelaes10,WangL08,TiwariOLVLHKCS11} or the
cryptographic protocol~\cite{DzPi_08}. In practice, however, software
countermeasures are often the preferred choice because they can be
easily deployed.

A common software countermeasure is to ensure that control
flow, memory accesses, and execution time of individual instructions
do not depend on secret
data~\cite{langley-valgrind,BernsteinLS12}. While such code prevents
 leaks through instruction and data caches, hiding all
dependencies can come with performance penalties~\cite{CoppensVBS09}.

More permissive countermeasures are to ensure that both branches of
each conditional fit into a single line of the instruction cache, to
preload lookup tables, or to permit secret-dependent memory access
patterns as long as they are secret-{\em in}dependent at the
granularity of cache lines or sets. Such permissive code can be faster
and is widely deployed in crypto-libraries such as OpenSSL. However,
analyzing its security requires intricate reasoning about the
interactions of the program and the hardware platform and has so far
only been done for restricted cases~\cite{cacheaudit-tissec15}.

A major hurdle for reasoning about these interactions are the
requirements put on tracking memory addresses:
%\begin{compactitem}
%\item 
On the one hand, static analysis of code with dynamic memory
allocation requires memory addresses to be dealt with symbolically.
%\item 
On the other hand, analysis of cache-aligned memory layout
requires support for accurately tracking the effect of bit-level and arithmetic
operations. %This is needed for establishing properties of the memory
%  layout, such as alignment of tables with cache line or page
%  boundaries, and for basic pointer arithmetic.
%\end{compactitem}
While there are solutions that address each of these requirements in
isolation, supporting them together is challenging, because the demand
for symbolic treatment conflicts with the demand for bit-level
precision. 

In this paper, we propose novel techniques that meet both requirements
and thus enable the automated security analysis of permissive software
countermeasures against microarchitectural side-channel attacks in
executable code.

\paragraph{Abstract Domains} Specifically, we introduce {\em masked
	symbols}, which are expressions that represent unknown addresses,
together with information about some of their bits. Masked symbols
encompass unknown addresses as well as known constants; more
importantly, they also support intermediate cases, such as addresses
that are unknown except for their least significant bits, which are
zeroed out to align with cache line boundaries.  We cast arithmetic
and bit-level operations on masked symbols in terms of a simple
set-based abstract domain, which is a data structure that supports
approximating the semantics of programs~\cite{cousot:cousot77}. We
moreover introduce a DAG-based abstract domain to represent sets of
traces of masked symbols.

\paragraph{Adversary Models}
Our novel abstract domains enable us to reason about the security of
programs against a wide range of microarchitectural side channel
adversaries, most of which are out of the scope of existing
approaches. The key observation is that the capability of these
adversaries to observe a victim's execution can be captured in terms
of projections to some of the bits of the addresses accessed by the
victim. This modeling encompasses adversaries that can see the full
trace of accesses to the instruction cache (commonly known as the {\em
	program counter security model}~\cite{molnar2006program}), but also
weaker ones that can see only the trace of memory pages, blocks, or
cache banks, with respect to data, instruction, or shared caches.

\paragraph{Bounding Leaks}
We use our abstract domains for deriving upper bounds on the amount of
information that a program leaks. We achieve this by counting the
number of observations each of these adversaries can make during
program execution, as in~\cite{lowe02quant,newsome09,koepfrybal10}. 
In this paper we
perform the counting by applying an adversary-specific projection to
the set of masked symbols corresponding to each memory access. We
highlight two features of this approach:
\begin{asparaitem}
	\item The projection may collapse a
	multi-element set to a singleton set, for example, in the case of
	different addresses mapping to the same memory block. This is the key
	for establishing that some memory accesses do not leak information to
	some observers, even if they depend on secret data.
	\item As the projection operates on individual bits, we can compute the
	adversary's observations on addresses that contain both known and
	unknown bits. In this way, our counting effectively captures the
	leak of the program, rather than the uncertainty about the address
	of the dynamically allocated memory.
\end{asparaitem}

\paragraph{Implementation and Evaluation}

We implement our novel techniques on top of the CacheAudit static binary
analyzer~\cite{cacheaudit-tissec15}, and we evaluate their
effectiveness in a case study where we perform the first formal
analysis of commonly used software countermeasures for protecting
modular exponentiation algorithms. The paper contains a detailed
description of our case study; here we highlight the following
results:
\begin{asparaitem}
	\item We analyze the security of the scatter/gather countermeasure
	used in OpenSSL~1.0.2f for protecting window-based modular
	exponentiation. Scatter/gather ensures that the pattern of data
	cache accesses is secret-independent at the level of granularity of
	cache lines and, indeed, our analysis of the binary executable
	reports security against adversaries that can monitor only cache
	line accesses.
	\item Our analysis of the scatter/gather countermeasure reports a
	leak with respect to adversaries that can monitor memory accesses at
	a more fine-grained resolution. This leak has been
	exploited in the CacheBleed attack~\cite{yarom16cachebleed}, where
	the adversary observes accesses to the individual banks within a
	cache line.  We analyze the variant of scatter/gather published in
	OpenSSL 1.0.2g as a response to the attack and prove its security
	with respect to powerful adversaries that can monitor the full
	address trace.
	\item Our analysis detects the side channel in the square-and-multiply
	 algorithm in libgcrypt~1.5.2 that has been exploited
	in~\cite{YaromF14,LiuYGHL15}, but can prove the absence of an
	instruction cache leak in the square-and-{\em
		always}-multiply algorithm used in libgcrypt~1.5.3, for some compiler 
	optimization levels.
\end{asparaitem}

Overall, our results illustrate (once again) the dependency of
software countermeasures against cache attacks on brittle details of
the compilation and the hardware architecture, and they demonstrate
(for the first time) how automated program analysis can effectively
support the rigorous analysis of permissive software countermeasures.

In summary, our contributions are to devise novel techniques that
enable cache-aware reasoning about dynamically allocated memory, and
to put these techniques to work in the first rigorous analysis of
widely deployed permissive countermeasures against cache side channel
attacks.

% We obtain these results using CacheAudit, a platform for the static
% analysis of cache side channels based on abstract interpretation.
% However, the adversary model and the code considered in our case study
% require us to extend CacheAudit along the following lines.
% \begin{itemize}
% \item First, we develop a novel abstract domain for tracking objects on the
%   heap based on symbolic memory addresses. Our domain goes beyond
%   existing approaches in that it can maintain partial knowledge of
%   individual bits of addresses, which is required for proving the
%   security of cache-aware placement of heap objects.
% 
% \item Second, we deveop a novel abstract domains to support
%   instruction caches and shared caches. As such caches are
%   structurally identical to data caches, we only introduce an
%   intermediate abstract domain that distributes memory accesses due to
%   instructions and data to the corresponding backend(s).
%     
%   \end{itemize}
% \paragraph{Summary of Contributions}

% \item {\bf Summary of Contributions}

% \end{itemize}

\section{Illustrative Example}\label{sec:illustration}

We illustrate the scope of the techniques developed in this paper
using a problem that arises in implementations of windowed modular
exponentiation. There, powers of the base are pre-computed and stored
in a table for future lookup. Figure~\ref{fig:precomp-layout} shows an
example memory layout of two such pre-computed values $p_2$ and $p_3$,
each of $3072$ bits, which are stored in heap memory. An adversary that 
observes accesses to the six
memory blocks starting at \texttt{80eb140} knows that $p_2$ was
requested, which can give rise to effective key-recovery
attacks~\cite{LiuYGHL15}.

\begin{figure}[h]
\centering
\includegraphics[width=.4\textwidth]{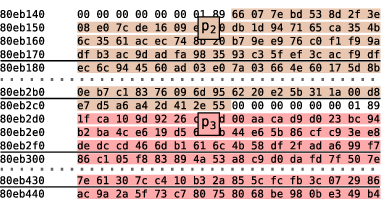}
\caption{Layout of pre-computed values in main memory, for the
  windowed modular exponentiation implementation from
  libgcrypt~1.6.1. Black lines denote the memory block boundaries, for
  an architecture with blocks of 64 bytes.}\label{fig:precomp-layout}
\end{figure}

Defensive approaches for table lookup, as implemented in NaCl or
libgcrypt 1.6.3, avoid such vulnerabilities by accessing \emph{all}
table entries in a constant order.  OpenSSL~1.0.2f instead uses a
more permissive approach that accesses only {\em one} table entry,
however it uses a smart layout of the tables to ensure that the
memory blocks are loaded into the cache in a constant order.
An example layout for storing $8$ pre-computed values is shown in
Figure~\ref{fig:scatter-gather-layout}.
\begin{figure}[h]
\centering
\includegraphics[width=.41\textwidth]{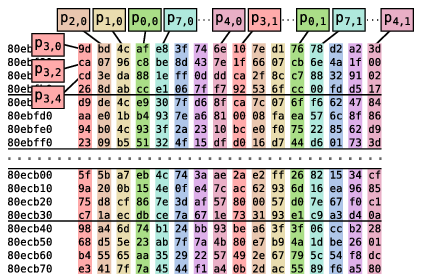}
\caption{Layout of pre-computed values in main memory, achieved with
  the scatter/gather countermeasure. Data highlighted in different
  colors correspond to pre-computed values $p_0,\dots,p_7$,
  respectively.  Black lines denote the memory block boundaries, for
  an architecture with blocks of 64
  bytes.}\label{fig:scatter-gather-layout}
\end{figure}
The code that manages such tables consists of three functions,
which are given in Figure~\ref{fig:CM-openssl}.
\begin{asparaitem}
\item To create the layout, the function \texttt{align} aligns a
  buffer with the memory block boundary by ensuring the
  least-significant bits of the buffer address are zeroed.
\item To write a value into the array, the function \texttt{scatter} ensures
that the bytes of the pre-computed values are stored \texttt{spacing}
bytes apart. 
\item Finally, to retrieve a pre-computed value from the buffer, the
  function~\texttt{gather} assembles the value by accessing its bytes
  in the same order they were stored.
\end{asparaitem}

\begin{figure}[h]
 \centering
%  \begin{subfigure}[b]{.5\textwidth}
% \begin{minipage}{.7\linewidth}
\begin{lstlisting}[mathescape]
align ( buf ):
    return buf - ( buf & ( block_size - 1 ) ) + block_size

scatter ( buf, p, k ):
    for i := 0 to N - 1 do
        buf [k + i * spacing] := p [k][i]

gather ( r, buf, k ):
    for i := 0 to N - 1 do
        r [i] := buf [k + i * spacing]
\end{lstlisting}
% \end{minipage}
% \begin{program}
%   |align|(|buf|)
%   \qquad |return|\ |buf| - (|buf| \AND (|block\_size| - 1)) + |block\_size|
%  \end{program}
%  \end{minipage}
%  \caption{Aligning a buffer with block boundary. 
% \texttt{block\_size} is the byte-size of memory blocks, and is usually 32, 64, 
% or 128.}\label{code:align}
%  \end{subfigure}
 
%  \vspace{10pt}
%  \begin{subfigure}[b]{.5\textwidth}
%  \begin{minipage}{.5\linewidth}
%  \begin{lstlisting}[mathescape]
% scatter ( buf, p, k ):
%     for i := 0 to N - 1 do
%         buf [k + i * spacing] := p [k][i]
% \end{lstlisting}
%  \begin{program}
% |scatter|(|buf|, |p|, k)
% \qquad\FOR i:=0 \TO N-1 \DO
% \quad    |buf|[k + i*|spacing|]:= |p|[k][i]
% \OD
%  \end{program}
%  \end{minipage}
%  \caption{Storing a pre-computed value into a buffer.}\label{code:scatter}
%  \end{subfigure}

% \vspace{10pt}
%  \begin{subfigure}[b]{.5\textwidth}
%  \begin{minipage}{.5\linewidth}
%  \begin{lstlisting}[mathescape]
% gather ( r, buf, k ):
%     for i := 0 to N - 1 do
%         r [i] := buf [k + i * spacing]
% \end{lstlisting}
%  \begin{program}
% |gather|(|r|, |buf|, k)
% \qquad\FOR i:=0 \TO N-1 \DO
% \quad    |r|[i] := |buf|[k + i*|spacing|]
% \OD
% \end{program}
% \end{minipage}
%  \caption{Retrieving a pre-computed value from a scattered 
%  buffer.}\label{code:gather}
%  \end{subfigure}
\caption{Scatter/gather method from OpenSSL~1.0.2f for aligning, storing 
and retrieving pre-computed values.}\label{fig:CM-openssl}
\end{figure}

% exploits
% timing differences due to \emph{cache-bank conflicts}. Those are
% possible in CPUs where cache blocks are divided into banks (e.g. Intel
% Sandy Bridge), to facilitate concurrent accesses to the data
% cache. For example, the platform targeted in~\cite{yarom16cachebleed}
% has 16 banks of 4 bytes; there, bits 2--5 of an address are used to
% determine the bank, and bits 0--1 are used to determine the offset
% within the bank. The distribution of the pre-computed values in
% scatter/gather (see Section~\ref{sec:illustration}) to different banks
% will be as shown in Figure~\ref{fig:precomp-banks}.

The techniques developed in this paper enable 
automatic reasoning about the effectiveness of such countermeasures,
for a variety of adversaries. Our analysis handles the
dynamically-allocated address in \texttt{buf} from
Figure~\ref{fig:CM-openssl} symbolically, but is still able to
establish the effect of \texttt{align} by
considering bit-level semantics of arithmetic operators on symbols:
First, the analysis establishes that \texttt{buf \& (block\_size - 1)}
clears the most-significant bits of the symbol $s$; second, when
subtracting this value from \texttt{buf}, the analysis determines that
the result is $s$, with the least-significant bits cleared; third, the
analysis determines that adding \texttt{block\_size} leaves the
least-significant bits unchanged, but affects the unknown bits,
resulting in a new symbolic address $s'\neq s$ whose least significant bits are cleared.

Using this information, in \texttt{gather}, our 
analysis establishes that, independently from the value of $k$, at each 
iteration of the loop, the most-significant bits of the accessed location are 
the same. Combining this information with knowledge about the  
architecture such as the block size, the analysis establishes that the sequence 
of accessed memory blocks is the same, thus the countermeasure ensures
security of scatter/gather with respect to
adversary who makes observations at memory-block granularity.% Additionally, we 

\section{Memory Trace Adversaries}\label{sec:security}

In this section we formally define a family of side channel
adversaries that exploit features of the microarchitecture, in
particular: caches.  The difference between these adversaries is the
granularity at which they can observe the trace of programs'
accesses to main memory. We start by introducing an abstract notion
of programs and traces.

\subsection{Programs and Traces}
We introduce an abstract notion of programs as the transformations of
the main memory and CPU register contents (which we collectively call the 
\emph{machine state}), caused by the execution of the program's 
instructions. Formally, a program
$\prog=(\States,\StatesInit,\Addresses,\Trans)$
consists of the following components:
\begin{compactitem}
	\item $\States$ - a set of {\em states}
	\item $\StatesInit\subseteq\States$ - a set of possible {\em initial} states
	\item $\Addresses$ - a set of {\em addresses}
	\item $\Trans \subseteq \States \times \Events \times \States$ - a
	{\em transition relation}
\end{compactitem}

A transition $(\state_i, \evt ,\state_j)\in\Trans$ captures two
aspects of a computation step: first, it describes how the instruction
set semantics operates on data stored in the machine state, namely by
updating $\state_i$ to $\state_j$; second, it describes the sequence
of memory accesses $\evt\in\Events$ issued during this update, which
includes the addresses accessed when fetching instructions from the
code segment, as well as the addresses containing accessed data. To
capture the effect of one computation step in presence of uncertain
inputs, we define the $\nextop{}$ operator:
\begin{equation*}
\nextop(S)=\{t.\state_\compsteps\evt_\compsteps\state_{\compsteps+1}\mid 
t.\state_\compsteps\in S \wedge
(\state_\compsteps,\evt_\compsteps,\state_{\compsteps+1})\in\Trans \}\ .
\end{equation*}
A {\em trace} of $\prog$ is an alternating sequence of states
and addresses $\state_0\evt_0\state_1\evt_1\ldots\state_\compsteps$ such that
$\state_0\in\StatesInit$, and that for all $i\in\{0,\dots,\compsteps-1\}$,
$(\state_i,\evt_i,\state_{i+1})\in\Trans$. The set of all traces
of $\prog$ is its {\em collecting semantics} 
$\collSem\subseteq\Traces$.
In this paper, we only consider terminating programs, and define their
collecting semantics  as the least fixpoint of the $\nextop$
operator containing~\StatesInit: 
$
\collSem=\StatesInit\cup\nextop(\StatesInit)\cup\nextop^2(\StatesInit)
\cup\dots\ .
$

\subsection{A Hierarchy of Memory Trace
	Observers}\label{sec:hierarchy}

Today's CPUs commonly partition the memory space into units of
different sizes, corresponding to virtual memory pages, cache lines,
or cache banks. The delays introduced by page faults, cache misses, or
bank conflicts enable real-world adversaries to effectively identify
the units involved in another program's memory accesses.  We
explicitly model this capability by defining adversaries that can
observe memory accesses at the granularity of each unit, but that
cannot distinguish between accesses to positions within the same unit.

\paragraph{Observing a Memory Access} 
On a common $\bitlen$-bit architecture, the most significant
$\bitlen-\logesize$ bits of each address serve as an identifier for
the unit containing the addressed data, and the least significant
$\logesize$ bits serve as the offset of the data within that
unit, where $2^\logesize$ is the byte-size of the respective unit.

We formally capture the capability to observe units of size $2^\logesize$
by projecting addresses to their $\bitlen-\logesize$ most significant
bits, effectively making the $\logesize$ least significant bits
invisible to the adversary. That is, when accessing the $\bitlen$-bit
address $a=(x_{\bitlen-1},x_{\bitlen-2},\dots,x_0)$, the adversary
observes
\begin{equation*}
\proj(a) := (x_{\bitlen-1},x_{\bitlen-2},\dots,x_{\logesize})\ .
\end{equation*}

\begin{example}
	A 32-bit architecture with 4KB pages, 64B cache lines, and 4B cache
	banks will use bits 0 to 11 for offsets within a page, 0 to 5 for
	offsets within a cache line, and 0 to 1 for offsets within a cache
	bank. That is, the corresponding adversaries observe bits 12 to 31,
	6 to 31, and 2 to 31, respectively, of each memory access.
\end{example}

\paragraph{Observing Program Executions}
We now lift the capability to observe individual memory accesses to full
program executions. This lifting is formalized in terms of {\em
	views}, which are functions that map traces in $\collSem$ to
sequences of projections of memory accesses to observable
units. Formally, the view of an adversary on a trace of the program is
defined by
\begin{equation*}
\view\colon
\state_0\evt_0\state_1\evt_1\dots \state_\compsteps \mapsto
\proj(\evt_0)\proj(\evt_1)\dots\proj(\evt_{\compsteps-1})\ .
\end{equation*}
By considering $\proj$ for different values of $\logesize$, we obtain a
hierarchy of memory trace observers:
\begin{asparaitem}
	\item The {\em address-trace observer} corresponds to $\logesize=0$; it can
	observe the full sequence $\evt_0\evt_1\dots\evt_{\compsteps-1}$ of
	memory locations that are accessed.  Security against this adversary
	implies resilience to many kinds of microarchitectural side
	channels, through cache, TLB, DRAM, and branch prediction
	buffer.\footnote{We do not model, or make assertions about, the
		influence of advanced features such as out-of-order-execution.}
	An address-trace observer restricted to instruction-addresses is
	equivalent to the program counter security
	model~\cite{molnar2006program}.
	\item The {\em block-trace observer} can observe the sequence of
	memory blocks loaded into cache lines. Blocks
	are commonly of size $32$, $64$, or $128$ bytes, i.e. $\logesize=5$,
	$6$, or $7$. Security against this
	adversary implies resilience against adversaries that can monitor
	memory accesses at the level of granularity of cache lines.  Most known
	cache-based attacks exploit observations at the
	granularity of cache lines,
	e.g.~\cite{Percival05cachemissing,ZhangJRR12,LiuYGHL15}.
	\item The {\em bank-trace observer} can observe a sequence of accessed
	cache banks, a technology used in some CPUs for hyperthreading. 
	An example of an attack at the granularity of cache
	banks is CacheBleed~\cite{yarom16cachebleed} against the
	scatter/gather implementation from OpenSSL 1.0.2f. The platform
	targeted in this attack has 16 banks of size $4$ bytes, i.e. $\logesize=2$.
	\item The {\em page-trace observer} can observe memory accesses at the
	granularity of accessed memory pages, which are commonly of size
	$4096$ bytes, i.e. $\logesize=12$. Examples of such attacks appear
	in~\cite{xu2015controlled} and~\cite{shinde06preventing}.
\end{asparaitem}
We denote the views of these observers by $\viewaddr$,
$\viewblock$, $\viewbank$, and $\viewpage$, respectively.

\paragraph{Observations Modulo Stuttering}
\gorancomment{still not good}
For each of the observers defined above we also consider a variant
that cannot distinguish between repeated accesses to the same unit
 (which we call {\em stuttering}). This is motivated
 by the fact that the latency of cache misses dwarfs that of cache hits
 and is hence easier to observe.

For the observer $\viewblock$, we formalize this variant in terms of a
function $\viewweakblock$ taking as input a block-sequence $w$ 
and replacing the maximal subsequences $B\cdots B$ of each block $B$ in $w$
by the single block $B$. E.g., $\viewweakblock$ maps both
$AABCDDC$ and $ABBBCCDDCC$ to the sequence $ABCDC$, making them
indistinguishable to the adversary.
This captures an adversary
that cannot count the number of memory accesses, as long as they
are guaranteed to be cache~hits\footnote{Here we rely on the
	(weak) assumption that the second $B$ in any access sequence
	$\cdots BB\cdots$ is guaranteed to hit the cache.}.

%%% OLD VERSION
%\gorancomment{Next paragraph should either talk about the block-trace
%instruction-cache adversary (as the second sentence), or about the general
%case; also: ``as long as they are guaranteed to hit the main memory" -- not 
%quite correct, the first may hit MM, following are fetched from cache.}
%For each of the observers defined above we also consider a variant
%that cannot distinguish between repeated accesses to the same memory
%block (which we call {\em stuttering}). This captures an adversary
%that cannot count the number of executed instructions, as long as they
%are guaranteed not to access main memory\footnote{Here we rely on the
%	(weak) assumption that the second $B$ in any access sequence
%	$\cdots BB\cdots$ is guaranteed to hit the cache.}; it is motivated
%by the fact that the latency of cache misses dwarfs that of cache hits
%and is hence easier to observe.
%
%For the observer $\viewblock$, we formalize this variant in terms of a
%function $\viewweakblock$ taking as input a block-sequence $w$ 
%and replacing the maximal subsequences $B\cdots B$ of each block $B$ in $w$
%by the single block $B$. E.g., $\viewweakblock$ maps both
%$AABCDDC$ and $ABBBCCDDCC$ to the sequence $ABCDC$, making them
%indistinguishable to the adversary.

\section{Static Quantification of Leaks}\label{sec:quantleak}
In this section, we characterize the amount of information leaked by a
program, and we show how this amount can be over-approximated by
static analysis. While the basic idea is standard (we rely on upper
bounds on the cardinality of an adversary's view), our presentation
exhibits a new path for performing such an analysis in the presence of
low inputs. In this section we outline the basic idea, which we
instantiate in Sections~\ref{sec:statedom} and~\ref{sec:tracedom} for
the observers defined in Section~\ref{sec:security}.

\paragraph{Quantifying Leaks}
As is common in information-flow analysis, we quantify the amount of
information leaked by a program about its secret (or {\em high})
inputs in terms of the maximum number of observations an adversary can
make, for any valuation of the public (or {\em low})
input~\cite{lowe02quant,koepfbasin07,smith09}.

To reflect the distinction between high and low inputs in the
semantics, we split the initial state into a low part $\StatesInitLo$
and a high part $\StatesInitHi$, i.e.,
$\StatesInit = \StatesInitLo\times\StatesInitHi$. We split the
collecting semantics into a family of collecting semantics
$\collSemLo$ with $I = \{\statelo\}\times\StatesInitHi$, one for each
$\statelo\in\StatesInitLo$, such that
$\collSem = \bigcup\limits_{\statelo}\collSemLo$.

 Formally, we define \emph{leakage} as
 the maximum cardinality of the adversary's view w.r.t. all
possible low inputs:
\begin{equation}\label{eq:leakage-low}
  \Leakage := \max_{\statelo\in\StatesInitLo}(
  \sizeof{\view(\collSemLo)})\ .
\end{equation}
The logarithm of this number corresponds to the number of leaked {\em
  bits}, and it comes with different interpretations in terms of
security. For example, it can be related to a lower bound on the
expected number of guesses an adversary has to make for successfully
recovering the secret input~\cite{massey94}, or to an upper bound on
the probability of successfully guessing the secret input in one
shot~\cite{smith09}.  Note that a leakage $\Leakage$ of $1$ (i.e. $0$
bits) corresponds to non-interference.

\paragraph{Static Bounds on Leaks}

For quantifying leakage based on Equation~\ref{eq:leakage-low}, one
needs to determine the size of the range of $\view$ applied to the
fixpoint $\collSemLo$ of the $\nextop$ operator, for all
$\statelo\in\StatesInitLo$ -- which is infeasible for most programs.

For {\em fixed} values $\statelo\in \StatesInitLo$, however, the
fixpoint computation can be tractably approximated by abstract
interpretation~\cite{cousot:cousot77}. The result is a so-called
abstract fixpoint $\abs{\collSem}$ that represents a superset of
$\collSemLo$, based on which one can over-approximate the range of
$\view$~\cite{koepfrybal10}.  One possibility to obtain bounds for the
leakage that hold for {\em all} low values is to compute one fixpoint
w.r.t. all possible $\StatesInitLo$ rather than one for each single
$\statelo\in\StatesInitLo$~\cite{cacheaudit-tissec15}. The problem
with this approach is that possible variation in low inputs is
reflected in the leakage, which can lead to imprecision.  
\paragraph{Secret vs Public, Known vs Unknown Inputs}

% Dealing with low inputs is one of the main challenges for
% QIF.
% \boriscomment{Existing treat low as constant (old Malacaria
%  Papers) or as high (CacheAudit). What does the Max-Sat paper do?}.
The key to our treatment of low inputs is that we classify variables
along two dimensions.
\begin{asparaitem}
\item The first dimension is whether variables carry secret (or {\em
    high}) or public (or {\em low}) data. High variables are
  represented in terms of the set of all possible values the variables can
  take, where larger sets represent more uncertainty about the values
  of the variable. Low data is represented in terms of a singleton
  set.
\item The second dimension is whether variables represent values that are
  known at analysis time or not. Known values are represented by
  constants whereas unknown values are represented as symbols.
\end{asparaitem}

\begin{example}
  The set $\{1,2\}$ represents a high variable that carries one of two
  known values. The set $\{s\}$ represents a low variable that carries
  a value $s$ that is not known at analysis time. The set $\{1\}$
  represents a low variable with known value $1$. Combinations such as
  $\{1,s\}$ are possible; this example represents a high variable,
  one of its possible values is unknown at analysis time.
\end{example}
 %This enables us to
%compute abstract fixpoints $\abs{\collSem}(s)$ that contain symbols
%$s$ representing the unknown values of the low inputs.
While existing quantitative information-flow analyses that consider
low inputs rely on explicit tracking of path
relations~\cite{backeskoepfrybal09,malacaria16}, our modeling allows
us to identify -- and factor out -- variations in observable outputs
due to low inputs even in simple, set-based abstractions. This enables
us to compute fixpoints $\abs{\collSem}(s)$ containing symbols, based
on techniques that are known to work on intricate low-level code, such
as cryptographic libraries~\cite{cacheaudit-tissec15}. The following
example illustrates this advantage.
\begin{example}\label{ex:simple-if}
  Consider the following program, where variable $x$ is initially
  assigned a pointer to a dynamically allocated memory region. We
  assume that the pointer is low but unknown, which we model by
  $x=\{s\}$, for some symbol $s$. Depending on a secret bit
  $h\in\{0,1\}$ this pointer is increased by 64 or not.

\begin{center}
\begin{lstlisting}[mathescape]
   x:= malloc(1000);
   if h then
       x := x+64 @\label{line:in-if}@
\end{lstlisting}
\end{center}

For an observer who can see the value of $x$ after termination, our
analysis will determine that leakage is bounded by
$\Leakage \le \sizeof{\{s, s+64\}} = 2$. This bound holds for
{\em any} value that $s$ may take in the initial state $\statelo$,
effectively separating uncertainty about low inputs from uncertainty
about high inputs.
\end{example}
In this paper we use low input to model dynamically allocated memory
locations, as in Example~\ref{ex:simple-if}. That is, we
rely on the assumption that locations chosen by the allocator do not
depend on secret data.  More precisely, we assume that the initial
state contains a pool of low but unknown heap locations that can be
dynamically requested by the program.

\section{Masked Symbol Abstract Domain}\label{sec:statedom}
Cache-aware code often uses Boolean and arithmetic operations on
pointers in order to achieve favorable memory alignment. In this
section we devise the \emph{masked symbol domain}, which is an
abstract domain that enables the static analysis of such code in the
presence of dynamically allocated memory, i.e., when the base pointer
is unknown.

% \subsection{Dynamic Memory as Low Data}
% \boriscomment{This subsection has little to do with the section. Move
%   somewhere else?}  Dynamic memory locations are not known at analysis
% time. To deal with them statically, we rely on the assumption that
% locations chosen by the allocator do not depend on secret data. More
% precisely, we assume that those locations are part of the low initial
% state, i.e., $\StatesInitLo$ contains a pool of low heap locations
% that can be dynamically requested by the program.

\subsection{Representation}
The masked symbol domain is based on finite sets of what we call {\em
	masked symbols}, which are pairs $(\constsym,\mask)$ consisting of
the following components:
\begin{enumerate}
	\item a {\em symbol} $\constsym\in \Sym$, uniquely identifying an
	unknown value, such as a base address;
	\item a {\em mask} $\mask \in\Masks$, representing a
	pattern of known and unknown bits. We abbreviate the mask
	$(\top,\dots,\top)$ by $\top$.
\end{enumerate}
The $i$-th bit of a masked symbol $(\constsym,\mask)$ is called {\em
	masked} if $\mask_i\in\{0,1\}$, and {\em symbolic} if
$\mask_i=\top$.  Masked bits are known at analysis time, whereas
symbolic bits are not.
Two special cases are worth pointing out: The masked symbol
$(\constsym,\top)$, with $\top$ as shorthand for $(\top,\dots,\top)$,
represents a vector of unknown bits, and $(\constsym,\mask)$ with
$\mask\in\Bitvecs$ represents the bit-vector $\mask$.  In that
way, masked symbols generalize both unknown values and bitvectors.

We use finite sets of masked symbols to represent the elements of the
masked symbol domain, that is, $\AbsMS=\partsof{\Sym\times\Masks}$.

\subsection{Concretization}\label{subsec:masked-concr}
We now give a semantics to elements of the masked symbol domain. This
semantics is parametrized w.r.t.\ valuations of the symbols. For
the case where masked symbols represent partially known heap
addresses, a valuation corresponds to one specific layout of the heap.

Technically, we define the concretization of elements
$\absms\in \AbsMS$ w.r.t.\ a function
$\symbmapping\colon \Sym\rightarrow \Bitvecs$ that maps symbols
to bit-vectors:
\begin{equation*}
\concMS_\symbmapping(\absms)=\{\symbmapping(\constsym) \maskop \mask
\mid
(\constsym,\mask) \in \absms\}
\end{equation*}
Here $\maskop$ is defined bitwise by
$(\symbmapping(\constsym)\maskop m)_i=m_i$ if $m_i\in\{0,1\}$, and
$\symbmapping(\constsym)_i$ if $m_i=\top$.

The function $\symbmapping$ takes the role of the low initial state, for
which we did not assume any specific structure in
Section~\ref{sec:quantleak}. Modeling $\symbmapping$ as a mapping from
symbols to bitvectors is a natural refinement to an initial state
consisting of multiple components that are represented by different
symbols.

\subsection{Counting}
We now show that the precise valuation of the symbols can be ignored
for deriving upper bounds on number of observations that an adversary
can make about a set of masked symbols. 
For this we conveniently interpret a symbol $s$ as a vector of
identical symbols $(s,\dots,s)$, one per bit.\footnote{We use this
	interpretation to track the provenance of bits in projections; it
	does {\em not} imply that the bits of $\symbmapping(s)$ are
	identical.} This allows us to apply the adversary's $\view$ (see
Section~\ref{sec:security}) on masked symbols as the respective
projection $\projgen$ to a subset of observable masked bits.  

Given a
set of masked symbols, we count the observations with respect to the
adversary by applying $\projgen$ on the set's elements and taking the
cardinality of the resulting set.

\begin{example}\label{ex:projectandcount}
	The projection of the set 
	\begin{equation*} 
	\absms=\{(s,(0,0,1)), (t,(\top,\top,1)), (u,(1,1,1))\}
	\end{equation*}
	of (three bit) masked symbols to their two most significant bits
	yields the set $\{(0,0), (t,t),$ $(1,1)\}$, i.e., we count three
	observations. However, the projection to their least significant bit
	yields only the singleton set $\{1\}$, i.e., the observation is
	determined by the masks alone. 
\end{example}

The next proposition shows that counting the symbolic observations
after projecting, as in Example~\ref{ex:projectandcount}, yields an
upper bound for the range of the adversary's view, for {\em any}
valuation of the symbols. We use this effect for static reasoning
about cache-aware memory alignment.

\begin{proposition}\label{prop:quant}
	For every $\absms\in\AbsMS$, every valuation
	$\symbmapping\colon\Sym\rightarrow \Bitvecs$, and every projection
	$\projgen$ mapping vectors to a subset of their components, we have
	$\sizeof{\projgen(\concMS_\symbmapping(\absms))}\le
	\sizeof{\projgen(\absms)}$.
\end{proposition}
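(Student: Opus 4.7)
My plan is to factor the inequality through a surjection from $\projgen(\absms)$ onto $\projgen(\concMS_\symbmapping(\absms))$. The starting observation is that concretization is pointwise: each masked symbol $(\constsym,\mask)\in\absms$ maps to the single bit-vector $\symbmapping(\constsym)\maskop\mask$, so the assignment
\begin{equation*}
c\colon(\constsym,\mask)\mapsto \symbmapping(\constsym)\maskop \mask
\end{equation*}
defines a surjective function $c\colon\absms\twoheadrightarrow\concMS_\symbmapping(\absms)$. Post-composing with $\projgen$ on the target yields a surjection $\projgen\circ c\colon\absms\twoheadrightarrow\projgen(\concMS_\symbmapping(\absms))$, and it is enough to show that this surjection descends to $\projgen(\absms)$, i.e.\ that whenever two elements of $\absms$ agree under the abstract projection, their concretizations agree under the concrete projection.

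The core step is therefore a position-wise case analysis. Fix $(\constsym_1,\mask^1),(\constsym_2,\mask^2)\in\absms$ with $\projgen(\constsym_1,\mask^1)=\projgen(\constsym_2,\mask^2)$, recalling the paper's convention that the abstract projection reads bit~$i$ of $(\constsym,\mask)$ as $\mask_i$ when $\mask_i\in\{0,1\}$ and as the symbol $\constsym$ when $\mask_i=\top$. For each retained position $i$ I would split on the value of $\mask^1_i$: if $\mask^1_i\in\{0,1\}$, agreement of the abstract projections forces $\mask^2_i=\mask^1_i$, and the definition of $\maskop$ then gives $c(\constsym_1,\mask^1)_i=\mask^1_i=\mask^2_i=c(\constsym_2,\mask^2)_i$; if $\mask^1_i=\top$, agreement forces $\mask^2_i=\top$ and $\constsym_1=\constsym_2$, whence $c(\constsym_1,\mask^1)_i=\symbmapping(\constsym_1)_i=\symbmapping(\constsym_2)_i=c(\constsym_2,\mask^2)_i$. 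In both cases the concretizations coincide on every projected position, so $\projgen(c(\constsym_1,\mask^1))=\projgen(c(\constsym_2,\mask^2))$.

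With this in hand, $\projgen\circ c$ is constant on every fibre of $\projgen$ over $\absms$ and therefore induces a well-defined surjection $\projgen(\absms)\twoheadrightarrow\projgen(\concMS_\symbmapping(\absms))$, from which the cardinality inequality $\sizeof{\projgen(\concMS_\symbmapping(\absms))}\le\sizeof{\projgen(\absms)}$ follows immediately.

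\paragraph{Expected difficulty.}
The only subtle point is keeping straight the dual role of $\projgen$: on the abstract side it acts on vectors whose entries live in $\{0,1\}\cup\Sym$, while on the concrete side it acts on plain bit-vectors. The inequality can easily be broken by a sloppy reading in which two distinct symbols are identified whenever $\symbmapping$ happens to agree on them; the case analysis above is what rules this out, since distinct symbols (or a symbol versus a concrete bit) already separate abstract projections, even if their concretizations might coincide. Once that bookkeeping is correct, the argument is essentially a pigeonhole applied to the surjection $c$.
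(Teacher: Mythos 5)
Your proof is correct and follows essentially the same route as the paper's: the central step is the identical lemma that equality of $\projgen$ on two masked symbols forces equality of $\projgen$ on their concretizations (your bit-wise case split on masked vs.\ symbolic positions is just a more explicit rendering of the paper's terse argument), and the final surjection/quotient step is the counting consequence the paper leaves implicit. No gaps; your version merely spells out the bookkeeping in more detail.
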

\begin{proof}
	This follows from the fact that equality of $\projgen$ on $(s,m)$ and
	$(s',m')$ implies equality of $\projgen$ on $\concMS_\symbmapping(s,m)$
	and $\concMS_\symbmapping(s',m')$, for all $\symbmapping$. To see
	this, assume there is a symbolic bit in $\projgen(s,m)$. Then we have
	$s=s'$, and hence $\symbmapping(s)=\symbmapping(s')$. If there is no
	symbolic bit, the assertion follows immediately.
\end{proof}

\subsection{Update}
We now describe the effect of Boolean and arithmetic operations on
masked symbols. We focus on operations between pairs of masked
symbols; the lifting of those operations to sets is obtained by
performing the operations on all pairs of elements in their
product. The update supports tracking information about known bits
(which are encoded in the mask) and about the arithmetic relationship
between symbols. We explain both cases below.

\subsubsection{Tracking Masks}\label{subsec:trackbits}

Cache-aware code often aligns data to the memory blocks of the
underlying hardware. 

\begin{example}\label{ex:maskop}
	The following code snippet is produced when compiling the
	\verb!align! function from Figure~\ref{fig:CM-openssl} with
	\verb!gcc -O2!.  The register \verb!EAX! contains a pointer to a
	dynamically allocated heap memory location.  {\normalfont
		\begin{lstlisting}[basicstyle=\ttfamily,escapechar=|]
		AND 0xFFFFFFC0, EAX |\label{line:and}|
		ADD 0x40, EAX |\label{line:add}|
		\end{lstlisting}
	}
	The first line ensures that the 6 least significant bits of that
	pointer are set to $0$, thereby aligning it with cache lines of $64$
	bytes. The second line ensures that the resulting pointer points into
	the allocated region while keeping the alignment.
\end{example}
We support different Boolean operations and addition on masked symbols
that enable us to analyze such code. The operations have the form
$(s'',m'')=\absop (s,m),(s',m')$, where the right-hand side denotes
the inputs and the left-hand side the output of the operation. The
operations are defined bit-wise as follows:
\begin{compactitem}
	\item[$\ttop=\ttand$ or $\ttop=\ttor$ :] $m_i''=\ttop m_i, m_i'$, for
	all $i$ such that $m_i,m_i'\in\{0,1\}$, i.e., the abstract $\absop$
	extends the concrete $\ttop$. Whenever $m_i$ or $m_i'$ is absorbing
	(i.e., $1$ for $\ttor$ and $0$ for $\ttand$), we set $m_i''$ to that
	value. In all other cases, we set $m_i''=\top$.
	
	The default is to introduce a fresh symbol for $s''$, unless the
	logical operation acts neutral on all symbolic bits, in which case we can set
	$s''=s$. 
	This happens in two cases: first, if the operands' symbols 
	coincide, i.e. $s=s'$; second, if one operand is constant, i.e. 
	$m'\in\{0,1\}^n$, and $m_i = \top$ implies that $m'_i$ is 
	neutral (i.e., $0$ for $\ttor$ and $1$ for $\ttand$).

	\item[$\ttop=\ttxor$:] $m_i''=\ttxor m_i, m_i'$, for all $i$ such that
	$m_i,m_i'\in\{0,1\}$, i.e., the abstract $\absxor$ extends the
	concrete $\ttxor$.  Whenever the symbols coincide, i.e. $s=s'$, we
	further set $m_i''=0$, for all $i$ with $m_i=m_i'=\top$.  In all
	other cases, we set $m_i''=\top$.
	
	The default is to introduce a fresh symbol for $s''$.  We can avoid
	introducing a fresh symbol and set $s''=s$ in case one operand
	is a constant that acts neutral on each symbolic bit of the other\ifext,
	i.e., if $m'\in\{0,1\}^n$ and if $m_i=\top$ implies
	$m_i'=0$\fi.

	\item[$\ttop=\ttadd$:]
	Starting from $i=0,1,\dots$, and as long as
	$m_i,m'_i\in\{0,1\}$, we compute $m_i''$ according to the standard
	definition of $\ttadd$\ifext\footnote{$\ttadd$ between two bit-vectors $x$
		and $y$ determines the $i$-th bit of the result $r$ as
		$r_i = x_i\oplus y_i \oplus c_i$, where $c_i$ is the carry
		bit. The carry bit is defined by
		$c_{i} = (x_{i-1}\wedge y_{i-1}) \vee (c_{i-1} \wedge x_{i-1})
		\vee (c_{i-1} \wedge y_{i-1})$, with $c_{0} = 0$.}\fi. As soon as we
	reach $i$ with $m_i=\top$ or $m'_i = \top$, we set $m_j''=\top$ for all $j\ge 
	i$.
	
	The default is to use a fresh symbol $s''$, unless
	one operand
	is a constant that acts neutral on the symbolic most-significant bits of 
	the other\ifext,
	i.e., if $m'\in\{0,1\}^n$ and
	for all $j\ge i$,
	$m_j=\top$ implies $m_j'=0$ and $c_j=0$\fi; then we keep the symbol,
	i.e., $s''=s$.
	
	\item[$\ttop=\ttsub$:]
%	\gorancomment{The two rules hold if we don't have a borrow from symbolic bits! 
%		(which is the case: 1) for the first bit (and SUB, not SBB); 2) if the last 
%	non-symbolic bit of the target is 1 (sss1...);
%	 3) if at the $(i-1)$-st bit-operation
%     was $s-s$ and there was no borrow from the $(i-1)$-st bit, then there is no 
%     borrow from the $i$-th bit. }
	
	We compute $\ttsub$ similarly to $\ttadd$, 
where the borrow bit takes the role of the carry bit. Here, we use the additional 
    rule that whenever  the symbols coincide, i.e. $s=s'$, we
    further set $m_i''=0$, for all $i$ with $m_i=m_i'=\top$. 
	
%\footnote{The result of 
%	$\ttsub x, y$ (which computes $y - x$) is defined
%	as $r_i = x_i\oplus y_i \oplus b_i$, where $b_i$ is the borrow
%	bit, with  
%	$b_{i} = (x_{i-1}\wedge \bar{y}_{i-1}) \vee (b_{i-1} \wedge x_{i-1})
%	\vee (b_{i-1} \wedge \bar{y}_{i-1})$, and $b_{0} = 0$.}

%	For $m_i,m'_i\in\{0,1\}$, we compute $m_i''$ according to the standard
%	definition of $\ttsub$. Whenever the symbols coincide, i.e. $s=s'$, we
%	further set $m_i''=0$, for all $i$ with $m_i=m_i'=\top$.  In all
%	other cases we set $m_i''=\top$.
%	
%	The default is to use a fresh symbol $s''$, unless every symbolic
%	bit of the target (second operand) is matched with a zero in the source (first
%	operant); then we keep
%	the target symbol, i.e., $s''=s'$.
\end{compactitem}

\begin{example} Consider again Example~\ref{ex:maskop} and assume that
	\verb!EAX! initially has the symbolic value $(\constsym,\top)$. 
	Executing Line~\ref{line:and} yields the masked symbol
	\begin{equation}\label{eq:lsbzero} 
	(\constsym,(\top\cdots\top{}000000))\ ,  
	\end{equation}
	Executing Line~\ref{line:add}, we obtain
	$(\constsym',(\top\cdots\top{}000000))$, for a fresh $\constsym'$. This masked 
	symbol points to the beginning of some (unknown) cache line. In
	contrast, addition of \verb!0x3F!  to \eqref{eq:lsbzero} would yield
	$(\constsym,(\top\cdots\top{}111111))$, for which we can statically
	determine containment in the same cache line as~\eqref{eq:lsbzero}.
\end{example}

\subsubsection{Tracking Offsets}\label{ssec:arith_rel}

Control flow decisions in low-level code often rely on comparisons
between pointers. For analyzing such code with sufficient precision,
we need to keep track of their relative distance.

\begin{example}\label{ex:pointers}
	Consider the function \verb!gather! from Figure~\ref{fig:CM-openssl}.
	When compiled with \verb!gcc -O2!, the loop guard is translated into a
	comparison of pointers. The corresponding pseudocode looks as follows: 
	{\normalfont
		\begin{lstlisting}[mathescape=true]
		y := r + N
		for x := r; x $\neq$ y; x++ do
		*x = buf [k + i * spacing]
		\end{lstlisting}} 
	Here, \verb!r! points to a dynamic memory
	location. The loop terminates whenever pointer \verb!x! reaches
	pointer \verb!y!.
\end{example}
In this section we describe a mechanism that tracks the distance
between masked symbols, and enables the analysis of comparisons, such
as the one in~Example~\ref{ex:pointers}.

\paragraph{Origins and Offsets}

Our mechanism is based on assigning to each masked symbol an {\em
	origin} and an {\em offset} from that origin. The origin of a symbol
is the masked symbol from which it was derived by a sequence addition
operations, and the offset tracks the cumulated effect of these
operations.
\begin{equation*}
\origin\colon  \MSym \rightarrow \MSym\quad\quad
\offset\colon  \MSym\rightarrow \mathbb{N}
%\begin{array}{rl}
%\origin\colon & \MSym \rightarrow \MSym \\
%\offset\colon & \MSym\rightarrow \mathbb{N}
%\end{array}
\end{equation*}
Initially, $\origin(x)=x$ and $\offset(x)=0$, for all $x\in\MSym$.

For convenience, we also define a partial inverse of $\origin$ and
$\offset$ describing the {\em successor} of an origin at a specific
offset. We formalize this as a function
$\suck\colon\MSym\times \mathbb{N}\rightarrow \MSym\cup\{\nosuccessor\}$ such 
that
$\suck(\origin(x),\offset(x))=x$.

\paragraph{Addition of Offsets}
When performing an addition of a constant to a masked symbol, the
mechanism first checks if there is already a masked symbol with the
required offset. If such a symbol exists, it is reused. If not, the
addition is carried out and memorized.

More precisely, the result of performing the addition $y=\absadd x, \intoff$
of a masked symbol $x\in\MSym$ with a constant $\intoff\in\Bitvecs$ is
computed as follows:
\begin{enumerate}
	\item If $\suck(\origin(x),\offset(x)+\intoff)=x'$ for some masked symbol
	$x'$, then we set $y=x'$.
	\item If $\suck(\origin(x),\offset(x)+\intoff)=\nosuccessor$, then
	we compute $y= \absadd x, \intoff$, as described in
	Section~\ref{subsec:trackbits},
	and update $\origin(y)=\origin(x)$,
	$\offset(y)=\offset(x)+\intoff$, and $\suck(\origin(y),\offset(y))=y$.
\end{enumerate}

Note that we
restrict to the case in which one operand is a constant.
In case {\em both} operands contain symbolic bits, for the result $y$ (obtained 
according to Section~\ref{subsec:trackbits}) we set $\origin(y)=y$ and 
$\offset(y)=0$.

% In case
% {\em both} operands contain symbolic bits, we introduce a fresh symbol
% $s'$ and return $(s',\top)$ as a result of the addition.

\subsubsection{Tracking Flag Values}\label{subsec:flag_values}
Our analysis is performed at the level of disassembled x86 binary
code, where inferring the status of CPU flags is crucial for
determining the program's control flow. We support limited derivation
of flag values; in particular, we determine the values of
the zero (\texttt{ZF}) and carry flags (\texttt{CF}) as follows.

For the Boolean and addition operations described in
Section~\ref{subsec:trackbits}, we determine the value of flag bits as
follows:
\begin{compactitem}
	\item If at least one masked bit of the result is non-zero, then
	\texttt{ZF = 0}.
	\item If the operation does not affect the (possibly symbolic)
	most-significant bits of the operands, then \texttt{CF = 0}.
\end{compactitem}
For comparison and subtraction operations, we rely on offsets for
tracking their effect on flags. Specifically, for $\abscmp x, y$ and
$\abssub x, y$, with source $x$ and target $y$, we determine the value
of flags as follows:
\begin{compactitem}
	\item If $x = y$, then \texttt{ZF = 1}; \label{itm:endloop}
	\item If $\origin(x)=\origin(y)$ and $\offset(x)\neq\offset(y)$ then \texttt{ZF
		= 0};
	% \item If $\origin(x)=\origin(y)$ and $\offset(x)>\offset(y)$ then
	%   \texttt{CF = 1}.  \boriscomment{Works only for unsigned operands.}
\end{compactitem}
In any other case, we assume that all combinations of flag
values are possible.

\begin{example}
	Consider again the code in Example~\ref{ex:pointers}.  Termination
	of the loop is established by an instruction $\ttcmp x,y$, followed
	by a conditional jump in case the zero flag is not set. In our
	analysis, we infer the value of the zero flag by comparing the
	offsets of $x$ and $y$ from their common origin~$r$.
\end{example}

\section{Memory Trace Abstract Domain}\label{sec:tracedom}
In this section, we present the \emph{memory trace domain}, which is a
data structure for representing the set of traces of 
possible memory accesses a
program can make, and for computing the number of observations that
the observers defined in Section~\ref{sec:hierarchy} can make.

\subsection{Representation}
We use a directed acyclic graph (DAG) to compactly represent sets of
traces of memory accesses. This generalizes a data structure that has
been previously used for representing sets of traces of cache hits and
misses~\cite{cacheaudit-tissec15}.

A DAG $\abstr$ from the memory trace domain $\AbsTR$ is a tuple 
$(\Nodes,\Edges,\rt,\Labl,\Rept)$.
The DAG has a set of vertices $\Nodes$ representing program points, with
a root $\rt\in\Nodes$ and a set of edges
$\Edges\subseteq \Nodes\times \Nodes$ representing transitions. We
equip the DAG with a vertex labeling
$\Labl\colon \Nodes\rightarrow\AbsMS$ that attaches to each vertex a set
of masked symbols representing the memory locations that may have been
accessed at this point, together with a repetition count
$\Rept\colon \Nodes\rightarrow\partsof{\mathbb{N}}$ that tracks the
number of times each address has been accessed.
% 
% The DAGs $\abstr=(\Nodes,\Edges,\rt,\Labl,\Rept)$ form the set of
% elements of the memory trace domain $\AbsTR$.
During program analysis, we maintain and manipulate a single DAG,
which is why we usually keep $\abstr$ implicit.

\subsection{Concretization}\label{subsec:tracedom-concr}

Each vertex $\node$ in $\abstr$ corresponds to the set of traces of memory
accesses performed by the program from the root up to this point of
the analysis. This correspondence is formally given by a
concretization function $\concTR_\symbmapping$ that is
parameterized by an instantiation
$\symbmapping\colon\Sym\rightarrow\Bitvecs$ of the masked symbols occurring
in the labels (see Section~\ref{sec:statedom}), and is defined~by:
\begin{equation*}
\concTR_\symbmapping(\node) = 
\bigcup_{\node_0\cdots \node_\compsteps }
\{ \evt_0^{r_0}\cdots \evt_\compsteps^{r_\compsteps}\mid \evt_i\in
\concMS_\symbmapping (\Labl(\node_i)), r_i\in
\Rept(\node_i)\},
\end{equation*}
where $\node_0\cdots \node_\compsteps$, 
with $\node_0=\rt$ and $\node_\compsteps=\node$, ranges over all paths from
$\rt$ to $\node$ in
$\abstr$. That is, for each such path, the concretization contains the
adversary's observations (given by the concretizations of the labels
of its vertices) and their number (given by the repetition count).

\subsection{Counting}

We devise a counting procedure that over-approximates the
number of observations different adversaries can make. The key feature
of the procedure is that the bounds it delivers are {\em independent}
of the instantiation of the symbols.

\begin{equation}\label{eq:countview}
\cnt(\node) = \sizeof{\Rept(\node)}\cdot 
\sizeof{\projgen(\Labl(\node))} 
\cdot
\sum\limits_{(\parent,\node)\in \Edges} \cnt(\parent)\ ,
\end{equation}
with $\cnt(\rt)=1$. For the stuttering observers, we replace the
factor $\sizeof{\Rept(\node)}$ from the expression in~\eqref{eq:countview}
by $1$, which captures that those observers cannot distinguish between
repetitions of accesses to the same unit.

\begin{proposition}\label{prop:tracequant}
	For all $\symbmapping\colon\Sym\rightarrow\Bitvecs$
	we have 
	\begin{equation*}
	\sizeof{\view(\concTR_\symbmapping(\node))}\le \cnt(\node)
	\end{equation*}
\end{proposition}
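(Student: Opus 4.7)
The plan is to prove the inequality by structural induction on the DAG $\abstr$ in topological order from the root, mirroring the recursive definition of $\cnt$.

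For the base case $\node = \rt$, I would rely on the convention that the root corresponds to the empty access sequence, so $\view(\concTR_\symbmapping(\rt))$ is a singleton and its cardinality equals $\cnt(\rt)=1$.

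For the inductive step, I would first unfold the concretization from Section~\ref{subsec:tracedom-concr}: every trace in $\concTR_\symbmapping(\node)$ is of the form $\tau\cdot \evt^{r}$ where $\tau\in \concTR_\symbmapping(\parent)$ for some parent $\parent$ with $(\parent,\node)\in\Edges$, $\evt\in \concMS_\symbmapping(\Labl(\node))$, and $r\in\Rept(\node)$. Since $\view$ acts pointwise by $\projgen$, applying it commutes with concatenation, giving
\begin{equation*}
\view(\concTR_\symbmapping(\node)) \subseteq \bigcup_{(\parent,\node)\in\Edges} \view(\concTR_\symbmapping(\parent))\cdot W(\node),
\end{equation*}
where $W(\node)=\{\projgen(\evt)^{r}\mid \evt\in\concMS_\symbmapping(\Labl(\node)),\, r\in\Rept(\node)\}$. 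Taking cardinalities, the union is bounded by the sum over parents, and the concatenation by the product of the two factors. The factor $|W(\node)|$ is at most $|\projgen(\concMS_\symbmapping(\Labl(\node)))|\cdot|\Rept(\node)|$, and here is where I invoke Proposition~\ref{prop:quant} to replace $|\projgen(\concMS_\symbmapping(\Labl(\node)))|$ by $|\projgen(\Labl(\node))|$, eliminating the dependence on $\symbmapping$. By the inductive hypothesis, $|\view(\concTR_\symbmapping(\parent))|\le \cnt(\parent)$ for each parent, and multiplying through yields exactly the recursive expression~\eqref{eq:countview} for $\cnt(\node)$.

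The only mildly delicate points are bookkeeping rather than mathematical. First, one must be careful that concatenation does not introduce spurious equalities; since we bound by cardinality of a Cartesian-product-like set the upper bound is safe, and coincidences between different parents only shrink the union, which is why summing over parents (rather than assuming disjointness) is sound. Second, for the stuttering observers $\viewweakblock$ etc., the same induction goes through verbatim, except that $\projgen(\evt)^r$ collapses under the stuttering equivalence to a single word $\projgen(\evt)$, which is why the definition of $\cnt$ replaces the $|\Rept(\node)|$ factor by $1$; the same argument as above then yields the bound without the repetition factor. I expect the main subtlety to be getting the base-case convention right so that the inductive product formula cleanly recovers $\cnt(\node)$ without off-by-one issues at the root.
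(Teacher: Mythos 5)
Your proof is correct and follows essentially the same route as the paper's: unfold $\cnt$ recursively over parents/paths, bound each vertex's contribution via Proposition~\ref{prop:quant} so the bound is independent of $\symbmapping$, and combine by sum-over-parents and product-with-the-label-and-repetition factors. Your version merely spells out the induction (including the root convention and the stuttering variant) that the paper's three-sentence argument leaves implicit.
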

\begin{proof}
	$\cnt(\node)$ recursively sums over all paths from $r$ to $\node$ and weights
	each vertex  with the size of $\projgen$ applied to its label. From
	Proposition~\ref{prop:quant} it follows that this size is larger than
	the the number of concrete observations, regardless of how the symbols
	are instantiated. This yields the assertion.
\end{proof}

\subsection{Update and Join}

The memory trace domain is equipped with functions for update and
join, which govern how sets of traces are extended and merged,
respectively.  %The domains are defined with respect to an
%observer's $\view$ (see Section~\ref{sec:security}).
%Below we 
%overload the notation of $\view$ to be lifted to sets.

The {\em update} of an abstract element $\abstr$ receives a vertex $\node$
representing a set of traces of memory accesses, and it extends $\node$ by
a new access to a potentially unknown address, represented by a set of
masked symbols $\absms\in\AbsMS$. Technically:
\begin{asparaenum}
	\item If the set of masked symbols is not a repetition (i.e. if
	$\Labl(\node)\neq \absms$), the update function appends a new vertex $\node'$
	to $\node$
	(adding $(\node,\node')$ to $\Edges$), and it sets $\Labl(\node')=\absms$ and
	$\Rept(\node')=\{1\}.$
	\item Otherwise (i.e. if $L(\node)=\absms$), it increments the possible
	repetitions in $\Rept(\node)$ by one.
\end{asparaenum}

The {\em join} for two vertices $\node_1,\node_2$ first checks whether those
vertices have the same parents and the same label, in which case $\node_1$ is
returned, and their repetitions are joined. Otherwise, a new vertex $\node'$
with $\Labl(\node')=\{\epsilon\}$ is generated and edges $(\node_1,\node')$ and
$(\node_2,\node')$ are added to $\Edges$.

\paragraph{Implementation Issues}
To increase precision in counting and compactness of the
representation, we apply the projection $\projgen$ when applying the
update function, and we delay joins until the next update is
performed.  In this way we only maintain the information that is
relevant for the final counting step and can encode accesses to the
same block as stuttering. This is relevant, for example, when an
if-statement fits into a single memory block.

% if-then-else statement spans two memory blocks, both of which are
% accessed regardless of the branch that is taken.

\begin{example}\label{ex:traces}
	Consider the following snippet of x86 assembly, corresponding to a
	conditional branch in libgcrypt~1.5.3:
	
	\begin{minipage}{\linewidth}
		\normalfont\small
		\begin{lstlisting}[basicstyle=\ttfamily,escapechar=|]
		41a90:  mov    0x80(%esp),%eax
		41a97:  test   %eax,%eax
		41a99:  jne    41aa1
		41a9b:  mov    %ebp,%eax
		41a9d:  mov    %edi,%ebp
		41a9f:  mov    %eax,%edi
		41aa1:  sub    $0x1,%edx
		\end{lstlisting}
	\end{minipage}
	Figure~\ref{fig:dag} shows the corresponding DAGs
	for an address-trace observer (Figure~\ref{subfig:dag-addr}) and for a
	block-trace observer (Figure~\ref{subfig:dag-block}) of the
	instruction cache. For both observers, the counting procedure reports
	two traces, i.e., a leak of $1$ bit. For the stuttering block-trace
	observer, however, the counting procedure determines that there is
	only one possible observation.

	\begin{figure}[h]
		\begin{subfigure}[b]{.5\textwidth}
			\includegraphics[scale=.2]{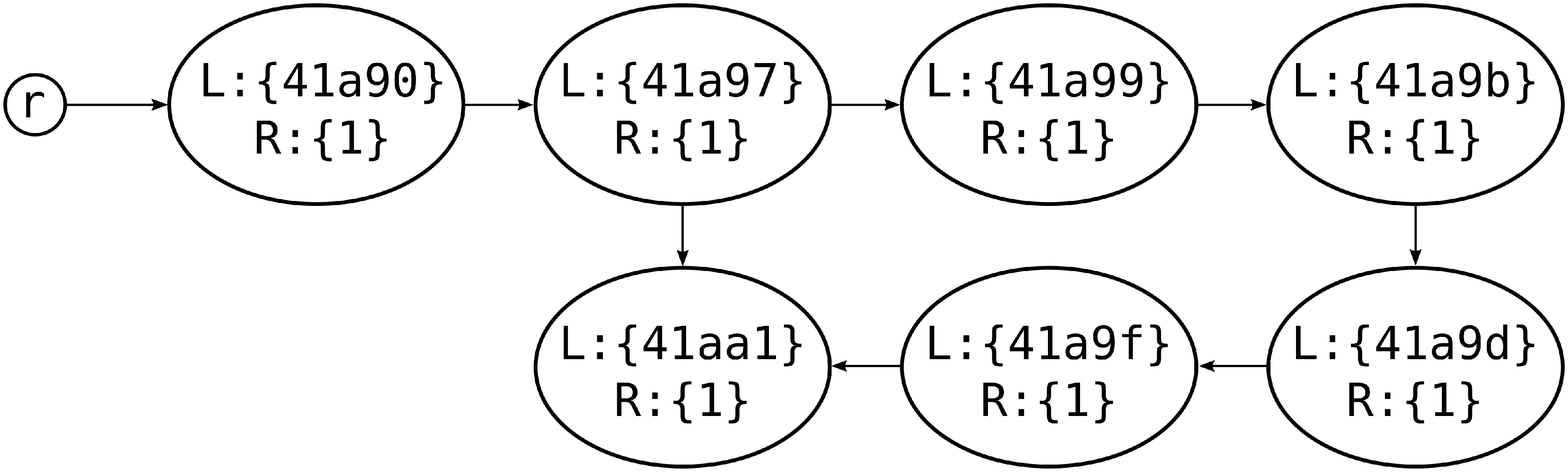}
			\caption{Address-trace observer.}\label{subfig:dag-addr}
		\end{subfigure}
		\begin{subfigure}[b]{.5\textwidth}\centering
			\includegraphics[scale=.2]{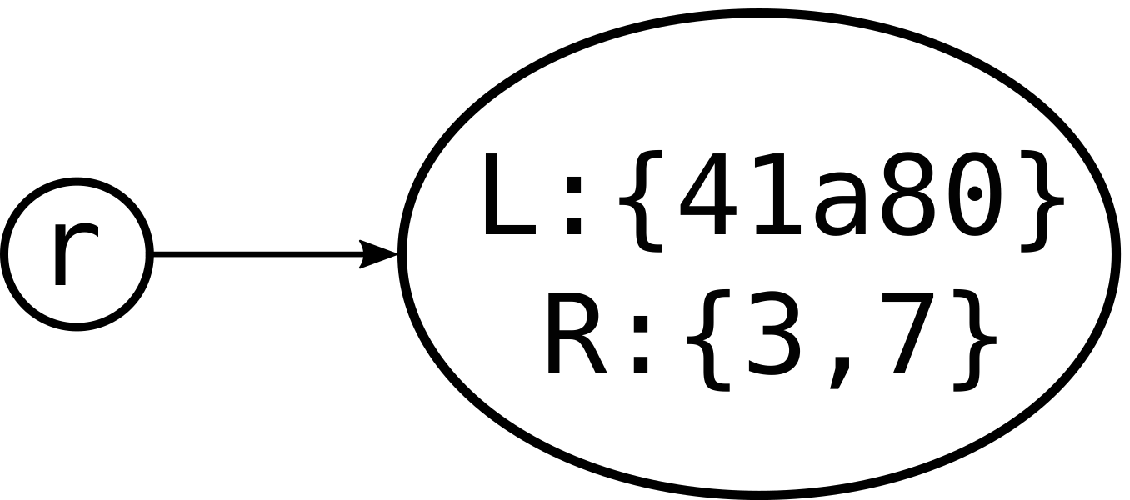}
			\caption{Block-trace observer.}\label{subfig:dag-block}
		\end{subfigure}
		\caption{DAGs that represent the traces corresponding to the assembly code in
			Example~\ref{ex:traces}, for an architecture with cache lines of 
			64 bytes.}\label{fig:dag}
	\end{figure}
\end{example}

\vspace{-10pt}
\section{Soundness}\label{sec:soundness}

In this section we establish the correctness of our approach. We split
the argument in two parts. The first part explains how we establish leakage
bounds w.r.t. {\em all} valuations of low variables.  The second part
explains the correctness of the abstract domains introduced in
Section~\ref{sec:statedom} and Section~\ref{sec:tracedom}.

\subsection{Global Soundness and Leakage Bounds}
We formalize the correctness of our approach based on established
notions of local and global soundness~\cite{cousot:cousot77}, which we
slightly extend to cater for the introduction of fresh symbols during
analysis.

For this, we distinguish between the symbols in
$\loSym\subseteq\Sym$ that represent the low initial state and those
in $\Sym\setminus\loSym$ that are introduced during the analysis. A
low initial state in $\StatesInitLo$ is given in terms of a valuation
of low symbols $\statelo\colon \loSym\rightarrow \{0,1\}^n$.
When introducing a fresh symbol $s$ (see Section~\ref{sec:statedom}),
we need to extend the domain of $\statelo$ to include $s$. We denote
by $\Ext(\statelo)$ the set of all functions $\stateloext$ with
$\stateloext\restriction_{\domain{\statelo}}=\statelo$, 
$\domain{\stateloext}\subseteq\Sym$, and
$\range{\stateloext}=\{0,1\}^n$.

With this, we formally define the {\em global soundness} of the
fixpoint $\Abscoll$ of the abstract transition function $\absnext$ as
follows:
\begin{equation}\label{eq:globalsound}
\forall\statelo\in\StatesInitLo\,\exists \stateloext\in\Ext(\statelo):
\collSem_\statelo\subseteq\gamma_{\stateloext}\left( \abs{\collSem} \right)\ .
\end{equation}
Equation~\eqref{eq:globalsound} ensures that for all low initial
states $\statelo$, every trace of the program is included in a
concretization of the symbolic fixpoint, for {\em some} valuation
$\stateloext$ of the symbols that have been introduced during
analysis.

The \emph{existence} of $\stateloext$ is sufficient to
prove our central result, which is a bound for the leakage w.r.t. all
low initial states.

\begin{theorem} Let $\abstr\in\AbsTR$ be the component in $\Abscoll$
  representing memory traces, and $\node\in\abstr$ correspond to the final
  state. Then
$$\Leakage=\max_{\statelo\in\StatesInitLo}\sizeof{\view(\collSemLo)} \leq 
\cnt(\node)$$
\end{theorem}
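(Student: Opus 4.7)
The plan is to chain together the global soundness property stated in Equation~\eqref{eq:globalsound} with Proposition~\ref{prop:tracequant}, and to exploit the fact that the bound $\cnt(\node)$ is independent of the valuation of the symbols.

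More concretely, fix an arbitrary low initial state $\statelo\in\StatesInitLo$. By global soundness, there exists an extension $\stateloext\in\Ext(\statelo)$ such that $\collSem_\statelo\subseteq \gamma_{\stateloext}(\Abscoll)$. Since $\node$ is the vertex of the memory trace component $\abstr\in\AbsTR$ corresponding to the final state, projecting onto the trace component gives
\begin{equation*}
\collSem_\statelo\subseteq \concTR_{\stateloext}(\node)\ .
\end{equation*}
Applying the view $\view$ preserves the inclusion, hence $\view(\collSem_\statelo)\subseteq \view(\concTR_{\stateloext}(\node))$, and taking cardinalities yields $\sizeof{\view(\collSem_\statelo)}\le\sizeof{\view(\concTR_{\stateloext}(\node))}$.

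At this point I would invoke Proposition~\ref{prop:tracequant}, which asserts that for every valuation $\symbmapping$ of the symbols, and in particular for $\stateloext$, we have $\sizeof{\view(\concTR_{\stateloext}(\node))}\le \cnt(\node)$. Combining the two inequalities yields $\sizeof{\view(\collSem_\statelo)}\le\cnt(\node)$. Because this bound holds for every $\statelo\in\StatesInitLo$ and the right-hand side $\cnt(\node)$ does not depend on $\statelo$, taking the maximum over $\statelo$ on the left preserves the inequality, which gives the claimed bound on $\Leakage$.

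The only subtle point is the role of the existentially quantified extension $\stateloext$: the bound of Proposition~\ref{prop:tracequant} holds uniformly in $\symbmapping$, so it does not matter which extension the global soundness statement picks. This is exactly the feature of the counting procedure highlighted in Section~\ref{sec:tracedom}, and it is what makes the argument go through without having to reason about the specific instantiation of symbols introduced during the analysis. I do not anticipate a hard step here; all the work is in establishing local soundness of the update operators (masked symbol updates and DAG updates) so that Equation~\eqref{eq:globalsound} holds, and in Proposition~\ref{prop:tracequant}, both of which are already in place.
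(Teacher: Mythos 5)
Your proposal is correct and takes essentially the same route as the paper: it chains global soundness~\eqref{eq:globalsound} with Proposition~\ref{prop:tracequant}, using that $\cnt(\node)$ is independent of the symbol valuation so the existentially quantified extension $\stateloext$ is harmless, exactly as in the paper's argument. The only cosmetic slip is the intermediate inclusion $\collSem_\statelo\subseteq\concTR_{\stateloext}(\node)$, which mixes state-and-address traces with pure address sequences; the clean statement is $\view(\collSem_\statelo)\subseteq\view(\concTR_{\stateloext}(\node))$, which is what the paper writes and what your subsequent steps actually use.
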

The statement follows because set inclusion of the fixpoints implies
set inclusion of the projection to memory traces:
$\view(\collSemLo)\subseteq\view(\gamma_\stateloext(\Abscoll))$. The
memory trace of the abstract fixpoint is given by
$\view(\concTR_\stateloext(\node))$, and Proposition~\ref{prop:tracequant}
shows that $\cnt(\node)$ delivers an upper bound, for any $\stateloext$.

\subsection{Local Soundness}

We say that an abstract domain is {\em locally sound} if the abstract
$\absnext$ operator over-approximates the effect of the concrete
$\nextop$ operator (here: in terms of set inclusion). Formally we
require that, for all abstract elements $\absabs$,
\begin{equation}\label{eq:localsoundness}
  \forall\statelo, 
  \exists\stateloext\in\Ext(\statelo): 
  \nextop\left(\conc_\statelo(\absabs)\right)\subseteq\conc_{\stateloext
  }(\absnext(\absabs))\ .
\end{equation}

\paragraph{From Local to Global Soundness}
It is a fundamental result from abstract
interpretation~\cite{cousot:cousot77} that local 
soundness~\eqref{eq:localsoundness} implies
global soundness~\eqref{eq:globalsound}. When the fixpoint is reached in a 
finite number of
steps, this result immediately follows for our modified notions of
soundness, by induction on the number of steps. This is sufficient for
the development in our paper; we leave an investigation of the general
case to future work

\paragraph{Local Soundness in CacheAudit}
It remains to show the local soundness of abstract transfer function
$\absnext$. In our case, this function is inherited from the CacheAudit
static analyzer~\cite{cacheaudit-tissec15}, and it is composed of
several independent and locally sound abstract domains. For details on
how these domains are wired together, refer to~\cite{cacheaudit-tissec15}. 

Here, we focus on proving local soundness conditions for the two
components we developed in this paper, namely the masked symbol and
the memory trace domains.

\paragraph{Masked Symbol Domain}

The following lemma states the local soundness of the Masked Symbol
Domain, following~\eqref{eq:localsoundness}:

\begin{lemma}[Local Soundness of Masked Symbol Domain]
For all operands $\ttop\in\{\ttand,\ttor,\ttxor,\ttadd,\ttsub\}$, we have
 \begin{align*}
   \forall \absms_1,\absms_2,\statelo, 
   \exists\stateloext\in\Ext(\statelo): \\
   \ttop(\concMS_{\statelo}(\absms_1), \concMS_{\statelo}(\absms_2))
   \subseteq \concMS_\stateloext (\absop(\absms_1,\absms_2))
\end{align*}
\end{lemma}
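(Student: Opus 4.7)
The plan is to reduce the set-level statement to individual pairs of masked symbols and then proceed by case analysis on the operator. Since the abstract and concrete operations on sets are both defined pointwise over their Cartesian product, and since $\concMS_{\stateloext}$ distributes over set union, it suffices to fix $(s_1, m_1) \in \absms_1$ and $(s_2, m_2) \in \absms_2$ and to exhibit an $\stateloext$ such that
\[
\ttop(\statelo(s_1) \maskop m_1,\, \statelo(s_2) \maskop m_2) = \stateloext(s_r) \maskop m_r,
\]
where $(s_r, m_r)$ is the masked symbol returned by $\absop$ on that pair. Because different pairs introduce distinct fresh symbols, the contributions of each pair to $\stateloext$ are independent and may be chosen separately.

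For each pair the verification proceeds bitwise. Writing $z$ for the concrete left-hand side, the obligation at bit $i$ is either $z_i = m_{r,i}$ (when $m_{r,i} \in \{0,1\}$) or $z_i = \stateloext(s_r)_i$ (when $m_{r,i} = \top$). For the Boolean operators $\ttand$, $\ttor$, $\ttxor$, the first obligation follows directly from the bitwise rules of Section~\ref{subsec:trackbits}: either both operand bits are concrete, or one of them is absorbing. For symbolic result bits, if the rule introduces a fresh $s_r$ then setting $\stateloext(s_r)_i := z_i$ discharges the obligation; otherwise the preservation rule $s_r = s_1$ fires only when the other operand's bit at position $i$ acts neutrally (or, for $\ttxor$, when $s_1 = s_2$, in which case the rule forces $m_{r,i} = 0$ and reduces the obligation to the concrete case), and a short check confirms that the concrete operation then returns $\statelo(s_1)_i$ unchanged.

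The main obstacle is the additive case $\ttadd$, where carry propagation couples the bits globally. The plan is to argue that at the first position $i$ with $m_{1,i} = \top$ or $m_{2,i} = \top$, the carry out of position $i$ is no longer determined by the concrete (masked) bits alone, so every higher position becomes unconstrained and is correctly marked $\top$ in $m_r$; in the fresh-symbol case, setting $\stateloext(s_r)$ bitwise to match $z$ on these positions discharges the obligation. The preserved-symbol subcase is more delicate: the side condition requires $m'_j = 0$ and $c_j = 0$ for all $j \ge i$, which guarantees that no carry is generated at or above position $i$, so the symbolic bits of $s_1$ flow unchanged into $z$. Subtraction is treated analogously with borrow in place of carry, with the additional observation that $\statelo(s) - \statelo(s) = 0$ bitwise justifies the rule that zeroes $m_{r,i}$ whenever $s_1 = s_2$ and both operand bits at position $i$ are symbolic.
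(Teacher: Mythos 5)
Your proof is correct and follows essentially the same route as the paper's own argument, which is precisely the dichotomy you use: when the symbol is preserved, the operation acts neutrally on all symbolic bits so $\stateloext=\statelo$ suffices, and when a fresh symbol $s''$ is introduced, one defines $\stateloext(s'')$ so that its masked value equals the concrete result; your version simply carries this out bitwise and per operator, and additionally makes explicit the pairwise lifting and the independence of fresh symbols across pairs. One small slip: for $\ttand$ and $\ttor$ the paper also preserves the symbol when the operands' symbols coincide ($s_1=s_2$), not only when the other operand is a neutral constant; your parenthetical covers the coinciding-symbol case only for $\ttxor$, but the omitted subcase is discharged in exactly your style by idempotence, since at doubly-symbolic positions the concrete bit is $\statelo(s_1)_i$ again.
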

\ifext
\begin{proof}
For the proof, we consider two cases:
\begin{asparaitem}
\item the operation preserves the symbol. Then the abstract update
  coincides with the concrete update, with $\stateloext=\statelo$ and
  $\nextop\left(\concMS_\statelo(\absabs)\right)=\concMS_\statelo(\absnext(\absabs))$. This
  is because the abstract operations are defined such that
  the symbol is only preserved when we can guarantee that the
  operation acts neutral on all symbolic bits.
\item the operation introduces a fresh symbol $s''$. Then we simply
  define $\stateloext(s'')$ such that
  $\stateloext(s'')\maskop m''=\absop(\statelo(s)\maskop m,
  \stateloext(s')\maskop m')$. This is possible because the concrete
  bits in $m''$ coincide with the operation, and the symbolic bits can
  be set as required by $\stateloext$.
\end{asparaitem}
Flag values are correctly approximated as all flag-value combinations
are considered as possible unless the values can be exactly
determined.
\end{proof}
\fi
\paragraph{Memory Trace Domain}
The following lemma states the soundness of the memory trace domain:

\begin{lemma}[Local Soundness of Memory Trace Domain]
 \begin{align*}
  \forall\statelo, 
\exists\stateloext\in\Ext(\statelo):\\ 
\upd\left(\concTR_\statelo(\abstr),\concMS_\stateloext(\absms)\right)\subseteq
\concTR_\stateloext(
    \abs{\upd} (\abstr,\absms))
\end{align*}
\end{lemma}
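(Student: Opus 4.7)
The plan is to pick a suitable extension $\stateloext$ of $\statelo$ and then verify the set-inclusion by unfolding the definitions of $\upd$, $\abs{\upd}$ and $\concTR$, with a case split on the two branches of the abstract update applied to $\abstr$ at the current vertex $\node$.

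First I would choose $\stateloext$ to extend $\statelo$ by assigning concrete bit-vectors to any symbol that appears in $\absms$ but not yet in $\domain{\statelo}$. Because every symbol appearing in a label of $\abstr$ was produced either from $\loSym$ or from an earlier application of $\abs{\upd}$ (whose freshly introduced symbols are added to $\domain{\statelo}$ by the same construction, propagated inductively), an inductive invariant ensures that all such symbols are already in $\domain{\statelo}$; hence $\concTR_\statelo(\abstr)\subseteq\concTR_\stateloext(\abstr)$, and it suffices to prove the inclusion with $\stateloext$ on both sides.

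Next I would take an arbitrary element of the left-hand side, which has the form $t\cdot\evt$ for some $t\in\concTR_\stateloext(\abstr)$ ending at $\node$ and some $\evt\in\concMS_\stateloext(\absms)$. By the definition of $\concTR$, $t$ is witnessed by a root-to-$\node$ path $\node_0\cdots\node_k$ together with choices $\evt_i\in\concMS_\stateloext(\Labl(\node_i))$ and $r_i\in\Rept(\node_i)$. A case split on $\abs{\upd}$ then closes the argument: if $\Labl(\node)\neq\absms$, the freshly appended vertex $\node'$ with $\Labl(\node')=\absms$ and $\Rept(\node')=\{1\}$ extends the path, and the pair $(\evt,1)$ is a valid choice at $\node'$, so $t\cdot\evt$ lies in $\concTR_\stateloext(\abs{\upd}(\abstr,\absms))$; if $\Labl(\node)=\absms$, the same path suffices, with the last repetition promoted from $r_k$ to $r_k+1$ in the updated $\Rept(\node)$.

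The main obstacle is the second branch, in which $\abs{\upd}$ folds the new access into the existing repetition block without adding a new vertex. Soundness here requires that the concrete last access $\evt_k$ chosen in the decomposition of $t$ agrees with the appended $\evt$, since the DAG concretization commits to one concrete access per vertex along a path. The key observation is that we are free to re-pick the decomposition of $t\cdot\evt$ when constructing the witness in $\concTR_\stateloext(\abs{\upd}(\abstr,\absms))$, and because both $\evt$ and $\evt_k$ range over the same set $\concMS_\stateloext(\absms)=\concMS_\stateloext(\Labl(\node))$, we may select $\evt_k=\evt$ and then the merged block $\evt^{r_k+1}$ is a valid decomposition. Writing this re-selection precisely, together with checking the inductive invariant on symbol provenance used to equate $\concTR_\statelo(\abstr)$ with $\concTR_\stateloext(\abstr)$, are the two points that require care.
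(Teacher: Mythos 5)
Your overall setup is reasonable and considerably more explicit than the paper's own argument, which simply asserts that $\abs{\upd}$ performs no abstraction on the appended set and only compacts ``repetitions of the same observation.'' Your choice of $\stateloext\in\Ext(\statelo)$ (assigning values to the symbols of $\absms$ not yet in $\domain{\statelo}$, with an invariant guaranteeing that all symbols occurring in labels of $\abstr$ are already covered) and your treatment of the first branch, where a fresh vertex $\node'$ with $\Labl(\node')=\absms$ and $\Rept(\node')=\{1\}$ is appended, both go through as stated.

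The problem is your resolution of the merge branch $\Labl(\node)=\absms$, which you yourself flag as the main obstacle. Membership of a trace in $\concTR_\stateloext(\abs{\upd}(\abstr,\absms))$ is a property of the concrete address string: the merged vertex $\node$ can only contribute a homogeneous block consisting of $r$ copies of a \emph{single} address drawn from $\concMS_\stateloext(\Labl(\node))$, with $r\in\Rept(\node)$. ``Re-picking the decomposition'' changes only the witness, not the string being witnessed. Concretely, if the left-hand-side element is $t\cdot\evt$ where $t$ actually ends in $\evt_k^{r_k}$ with $\evt_k\neq\evt$ (both lying in $\concMS_\stateloext(\absms)$, which is possible whenever that set is not a singleton and $r_k\geq 1$), then $t\cdot\evt$ ends in $\evt_k^{r_k}\evt$, which is not a homogeneous block of length $r_k+1$ for any choice of address; selecting ``$\evt_k=\evt$'' as you propose witnesses a \emph{different} string, not $t\cdot\evt$. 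So as written the inclusion fails in this branch, and your key step is invalid. To close it you need an extra ingredient that the paper leaves implicit: either the concrete $\upd$ being over-approximated appends the address actually accessed, which for a genuine repetition coincides with the preceding access (so the heterogeneous tail never arises), or the merge is only performed when the (projected) label determines a single observation, as in the implementation where $\projgen$ is applied at update time. Making one of these assumptions explicit, and proving the merge branch under it, is what is missing from your argument.
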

\begin{proof}
The local soundness of the memory trace domains follows directly 
because the update does not perform any abstraction with respect to
the sets of masked symbols it appends; it just yields a more compact
representation in case of repetitions of the same observation. 
\end{proof}

\section{Case Study}\label{sec:casestudy}

This section presents a case study, which leverages the
techniques developed in this paper for the first rigorous analysis of
a number of practical countermeasures against cache side channel attacks on
modular exponentiation algorithms. The countermeasures are
 from releases of the cryptographic 
libraries libgcrypt and OpenSSL from April 2013 to March 2016.  We report on results for
bits of leakage to the adversary models presented in
Section~\ref{sec:hierarchy} (i.e., the logarithm of the maximum number of 
observations the adversaries can make, 
see Section~\ref{sec:quantleak}), due to instruction-cache (I-cache)
accesses and data-cache (D-cache) accesses.\footnote{We also analyzed
  the leakage from accesses to shared instruction- and data-caches;
  for the analyzed instances, the leakage results were consistently
  the maximum of the I-cache and D-cache leakage results.}  As the
adversary models are ordered according to their observational
capabilities, this sheds light into the level of provable security
offered by different protections.

% \TODO{Can we analyse something with clang?}

\subsection{Analysis Tool}\label{sec:implement}
We implement the novel abstract domains described in
Sections~\ref{sec:statedom} and~\ref{sec:tracedom} on top of the
CacheAudit open source static
analyzer~\cite{cacheaudit-tissec15}. CacheAudit provides
infrastructure for parsing, control-flow reconstruction, and fixed
point computation. Our novel domains extend the scope of CacheAudit by
providing support for (1) the analysis of dynamically allocated
memory, and for (2) adversaries who can make fine-grained observations
about memory accesses. The source code is publicly 
available\footnote{\url{
http://software.imdea.org/cacheaudit/memory-trace}}.
For all considered instances, our analysis takes between 0 and 4 seconds on 
a 
t1.micro virtual machine instance on Amazon EC2.

\subsection{Target Implementations}\label{sec:target}

The target of our experiments are different side-channel countermeasures 
for modular exponentiation, which we analyse at x86 binary level. 
Our testbed consists 
of C-implementa\-tions of ElGamal decryption~\cite{elgamal1985public} with 
$3072$-bit keys, 
using $6$ different implementations of modular exponentiation, 
which we compile using \texttt{gcc~4.8.4}, on a 32-bit Linux machine.

% \gorancomment{t1.micro means 0.613 GB memory, 1 vCPU / equivalent to 1.0--1.2 
% GHz; it provides faster burst performance, is it really 
% good for benchmarking?}
% \gorancomment{Runtime is slower on my old 9yo pc at home: 
% $<1$ sec for most; 10sec for scatter/gather (4sec on EC2); 27.7min for 
% libgcrypt 1.6.3 ($<1$ sec on EC2).}
% \TODO{Number of interpreted instructions: between 20 and 6000. 
% Report on this?}

We use the
ElGamal implementation from the libgcrypt 1.6.3 library, in which 
we replace the source code for modular exponentiation (\texttt{mpi-pow.c}) with 
implementations containing countermeasures from different versions 
of libgcrypt and OpenSSL. 
For libgcrypt, we consider versions 1.5.2 and 1.5.3, which implement 
square-and-multiply modular exponentiation, as well as versions 1.6.1 and 
1.6.3, which implement sliding-window modular exponentiation. Versions 
1.5.2 and 1.6.1 do not implement a dedicated countermeasure against cache 
attacks.
% `
For OpenSSL, we consider versions 1.0.2f and 1.0.2g, which implement 
fixed-window modular exponentiation with two different countermeasures against 
cache attacks. We integrate the countermeasures into the 
libgcrypt~1.6.3-implementation of modular exponentiation.

% \gorancomment{actually technically we integrated openssl's solutions into
% 1.6.1 
% but 1.6.1 and 1.6.3 are very similar, differing almost exclusively in the 
% countermeasure-part. If I write 1.6.1, then it is only more confusing. For 
% artifact evaluation should adapt the source to look derived from 1.6.3 (small 
% changes, e.g. renamed variables)}

% In those 
% versions of libgcrypt, four variants of modular exponentiation 
% are implemented,
% with different countermeasures against side-channel attacks. In addition
% to those implementations, we analyze two side-channel countermeasures
% introduced in OpenSSL versions 1.0.2f and 1.0.2g, respectively.  For
% cleaner comparison of security and performance we implement all
% countermeasures on top of libgcrypt 1.6.3 and analyze this code instead.

% For libgcrypt we consider versions 1.5.2 to 1.6.3. In those versions
% of libgcrypt, four variants of modular exponentiation are implemented,
% with different countermeasures against side-channel attacks. In addition
% to those implementations, we analyze two side-channel countermeasures
% introduced in OpenSSL versions 1.0.2f and 1.0.2g, respectively.  For
% cleaner comparison of security and performance we implement all
% countermeasures on top of libgcrypt 1.6.3 and analyze this code instead.

The current version of CacheAudit supports only a subset of the x86
instruction set, which we extended with instructions required for this case 
study. To bound the
required extensions, we focus our analysis on the regions of the
executables that were targeted by exploits and to which the
corresponding countermeasures were applied, rather than the whole
executables.  As a consequence, the formal statements we derive only
hold for those regions. In particular, we do not analyze the code of
the libgcrypt's multi-precision integer multiplication and modulo
routines, and we specify that the output of the memory allocation
functions (e.g. \texttt{malloc}) is symbolic (see
Section~\ref{sec:statedom}).

% \gorancomment{Talk somewhere about the manual labor required to analyze a 
% binary? This is, the config files.}

\subsection{Square-and-Multiply Modular Exponentiation}\label{sec:squaremult}

%Square and multiply + attacks
The first target of our analysis is modular exponentiation by
square-and-multiply~\cite{menezes96hac}. The algorithm is depicted in
Figure~\ref{fig:square-multiply} and is implemented, e.g., in
libgcrypt version 1.5.2. Line 5 of the algorithm contains a
conditional branch whose condition depends on a bit of the secret
exponent. An attacker who can observe the victim's accesses to
instruction or data caches may learn which branch was taken and
identify the value of the exponent bit. This weakness has been shown
to be vulnerable to key-recovery attacks based on
prime+probe~\cite{LiuYGHL15, ZhangJRR12} and
flush+reload~\cite{YaromF14}.

% Square and always multiply
In response to these attacks, libgcrypt 1.5.3 implements a
countermeasure that makes sure that the squaring operation is always
performed, see Figure~\ref{fig:square-always-multiply} for the
pseudocode.  It is noticeable that this implementation still contains
a conditional branch that depends on the bits of the exponent in 
Lines~7--8, namely the copy operation that selects the outcome of both
multiplication operations. However, this has been considered a minor
problem because the branch is small and is expected to fit into the
same cache line as preceding and following code, or to be always
loaded in cache due to speculative execution~\cite{YaromF14}. In the
following, we apply the techniques developed in this paper to analyze
whether the expectations on memory layout are met.\footnote{Note that
  we analyze the branch in Lines 7--8 for one iteration; in the
  following iteration the adversary may learn the information by
  analyzing which memory address is accessed in Line 3 and 4.}

\begin{figure}[h]\centering
\begin{minipage}{.7\linewidth}
\begin{lstlisting}[mathescape]
r := 1
for i := |e| - 1 downto 0 do
    r := mpi sqr(r)
    r := mpi mod(r, m)
    if $e_i$ = 1 then
        r := mpi_mul(b, r)
        r := mpi_mod(r, m)
    return r
\end{lstlisting}
\end{minipage}
\vspace{-1em}
\caption{Square-and-multiply modular exponentiation  
}\label{fig:square-multiply}

\begin{minipage}{.7\linewidth}
\begin{lstlisting}[mathescape]
r := 1
for i := |e| - 1 downto 0 do
    r := mpi_sqr(r)
    r := mpi_mod(r, m)
    tmp := mpi_mul(b, r)
    tmp := mpi_mod(tmp , m)
    if $e_i$ = 1 then
        r := tmp
    return r
\end{lstlisting}
\end{minipage}
\vspace{-1em}
\caption{Square-and-always-multiply  
exponentiation}\label{fig:square-always-multiply}
\end{figure}
\begin{figure}[h]\small
\centering
\begin{subfigure}{.45\textwidth}
\centering
\begin{tabular}[h]{r|ccc}
 Observer  & \addrtrace & \blocktrace & \weaktrace\\\hline
 I-Cache  & $1$ bit   & $1$ bit & $1$ bit  \\
 D-Cache  & $1$ bit   & $1$ bit & $1$ bit  \\
\end{tabular}
\vspace{-.5em}
\caption{Square-and-multiply from 
libgcrypt~1.5.2}\label{fig:res-1.5.2}
\end{subfigure}
\quad\quad
\begin{subfigure}{.45\textwidth}
\centering
\begin{tabular}[h]{r|ccc }
 Observer  & \addrtrace & \blocktrace   & \weaktrace \\\hline
 I-Cache   & $1$ bit & $1$ bit & $0$ bit \\
 D-Cache   & $0$ bit & $0$ bit & $0$ bit \\
\end{tabular}
\vspace{-.5em}
\caption{Square-and-{\em always}-multiply from
  libgcrypt~1.5.3}\label{fig:res-1.5.3-O2}
\end{subfigure}
\vspace{-1em}
\caption{Leakage of modular exponentiation algorithms to observers of
  instruction and data caches, with cache line size of 64 bytes and
  compiler optimization level~\texttt{-O2.}}\label{fig:sqandm}
\end{figure}

\begin{figure}[h]\small
\centering
 \begin{tabular}[h]{r|ccc }
 Observer  & \addrtrace & \blocktrace   & \weaktrace \\\hline
 I-Cache   & $1$ bit & $1$ bit & $1$ bit \\
 D-Cache   & $1$ bit & $1$ bit & $1$ bit \\
 \end{tabular}
 \vspace{-1em}
\caption{Leakage of square-and-always-multiply from libgcrypt~1.5.3,
  with cache line size of 32 bytes and compiler optimization level 
\texttt{-O0}.}\label{fig:res-1.5.3-O0}
\end{figure}

\begin{figure}[h]
\centering
\begin{subfigure}[b]{.45\textwidth}\centering
\includegraphics[width=.8\textwidth]{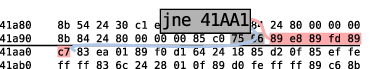}
\caption{Compiled with the default gcc optimization level -O2. Regardless 
whether the jump is taken or not, first block \texttt{41a80} is accessed, 
followed by block \texttt{41aa0}. This results in a 0-bit 
\weaktrace-leak.}\label{subfig:layout-1.5.3-O2}
\end{subfigure}
\begin{subfigure}[b]{.45\textwidth}\centering
\includegraphics[width=.8\textwidth]{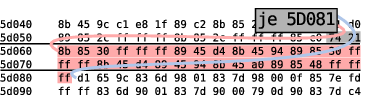}
\caption{Compiled with gcc optimization level -O0. The memory block 
\texttt{5d060} is only accessed when the jump is taken. This results in a 1-bit 
\weaktrace-leak.}\label{subfig:layout-1.5.3-O0}
\end{subfigure}
\caption{Layout of libgcrypt~1.5.3 executables with 32-byte memory
  blocks (black lines denote block boundaries). The highlighted code
  corresponds to the conditional branching in lines 7--8 in
  Figure~\ref{fig:square-always-multiply}. The red region corresponds
  to the executed instructions in the if-branch. The blue curve points
  to the jump target, where the jump is taken if the if-condition does
  not hold.}\label{fig:layout-1.5.3}
\end{figure}

\paragraph{Results} 
The results of our analysis are given in Figure~\ref{fig:sqandm} and 
Figure~\ref{fig:res-1.5.3-O0}.
\begin{asparaitem} 
\item Our analysis identifies a 1-bit data cache leak in
  square-and-multiply exponentiation (line 2 in
  Figure~\ref{fig:res-1.5.2}), due to memory accesses in the
  conditional branch in that implementation. Our analysis confirms
  that this data cache leak is closed by square-and-always-multiply
  (line 2 in Figure~\ref{fig:res-1.5.3-O2}).
\item Line 1 of Figures ~\ref{fig:res-1.5.2} and
  Figure~\ref{fig:res-1.5.3-O2} show that both implementations leak
  through instruction cache to powerful adversaries who can see each
  access to the instruction cache. However, for weaker, stuttering block-trace (\weaktrace)
  observers that cannot distinguish between repeated accesses to a
  block, square-and-always-multiply does {\em not} leak, confirming the
  intuition that the conditional copy operation is indeed less problematic
  than the conditional multiplication. 
\item The comparison between Figure~\ref{fig:res-1.5.3-O2} and 
Figure~\ref{fig:res-1.5.3-O0} demonstrates that the effectiveness of 
countermeasures can depend on  details such as cache line size and compilation
strategy. This is illustrated in Figure~\ref{fig:layout-1.5.3}, which 
shows that more aggressive
compilation leads to more compact code that fits into single cache
lines.  The same effect is observable for data caches, where more
aggressive compilation avoids data cache accesses altogether.

\end{asparaitem}

\begin{figure}[h]\centering
\begin{minipage}{.7\linewidth}
\begin{lstlisting}[mathescape]
  if e0 == 0 then
      base_u := bp
      base_u_size := bsize
  else
      base_u := b_2i3[e0 - 1]
      base_u_size := b_2i3size[e0 - 1]
\end{lstlisting}
\end{minipage}
 \vspace{-1em}
\caption{Table lookup from libgcrypt~1.6.1. %
  Variable \texttt{e0} represents the window, right-shifted by $1$. The lookup
  returns a pointer to the first limb of the multi-precision integer
  in \texttt{base\_u}, and the number of limbs in \texttt{base\_u\_size}. The 
first branch
  deals with powers of $1$ by returning pointers to the base. 
}\label{fig:1.6.1-src}
\end{figure}

\subsection{Windowed Modular Exponentiation}\label{sec:windowed-exp}

In this section we analyze windowed algorithms for modular
exponentiation~\cite{menezes96hac}. These algorithms differ from algorithms 
based on
square-and-multiply in that they process multiple exponent bits in one
shot. For this they commonly rely on tables filled with pre-computed
powers of the base. For example, for moduli of $3072$ bits,
libgcrypt 1.6.1 pre-computes $7$
multi-precision integers and handles the power~$1$ in a branch, see
Figure~\ref{fig:1.6.1-src}.
Each pre-computed value requires $384$ bytes
of storage, which amounts to $6$ or $7$ memory blocks in architectures
with cache lines of $64$ bytes. Key-dependent accesses to those tables
can be exploited for mounting cache side channel
attacks~\cite{LiuYGHL15}.

% \begin{figure}[H]
% \begin{lstlisting}[mathescape]
%   if (e0 == 0) {
%       base_u = bp;
%       base_u_size = bsize;
%   } else {
%       base_u = b_2i3[e0 - 1];
%       base_u_size = b_2i3size[e0 - 1];
%   }
% \end{lstlisting}
% \caption{Sliding window table lookup from libgcrypt~1.6.1. %
%   Variable \texttt{e0} represents the window, right-shifted by $1$. The lookup
%   returns a pointer to the first limb of the multi-precision integer
%   in \texttt{base\_u}, and the number of limbs in \texttt{base\_u\_size}. The 
% first branch
%   deals with powers of $1$ by returning pointers to the base. 
% }\label{fig:1.6.1-src}
% \end{figure}

We consider three countermeasures, which are commonly deployed to defend 
against this
vulnerability. They have in common that they all {\em copy} the
table entries instead of returning a pointer to the entry.
\begin{figure}[h]\centering
\begin{minipage}{.7\linewidth}
\begin{lstlisting}[mathescape,escapechar=~]
// Retrieves r from p[k]
secure_retrieve ( r , p , k):
    for i := 0 to n - 1 do
      for j := 0 to N - 1 do
        v := p[i][j]
        s := (i == k)
        r[j] := r[j] ^ ((0 - s) & ( r[j] ^ v))~\label{ln:cp}~
\end{lstlisting}
\end{minipage}
 \vspace{-1em}
\caption{A defensive routine for array lookup with a constant sequence of
  memory accesses, as implemented in libgcrypt
  1.6.3.}\label{fig:CM-1.6.3}
\end{figure}
\begin{asparaitem}
\item The first countermeasure ensures that in the copy process, a constant 
sequence of memory locations is accessed, see 
Figure~\ref{fig:CM-1.6.3}
  for pseudocode. The expression on line \ref{ln:cp} ensures that
  only the $k$-th pre-computed value is actually copied to
  \texttt{r}. This countermeasure is implemented, e.g. in NaCl and libgcrypt
  1.6.3.
\item The second countermeasure stores pre-computed values in such a
  way that the $i$-th byte of all pre-computed values resides in the
  same memory block. This ensures that when the pre-computed values are
  retrieved, a constant sequence of memory blocks will be
  accessed. This so-called scatter/gather technique is described in
  detail in Section~\ref{sec:illustration}, with code in
  Figure~\ref{fig:CM-openssl}, and is deployed, e.g. in OpenSSL~1.0.2f.
\item The third countermeasure is a variation of scatter/gather, and ensures 
that the gather-procedure performs a constant sequence of memory accesses (see 
Figure~\ref{fig:CM-gather-const}). 
This countermeasure was recently introduced in OpenSSL~1.0.2g, as a response to 
the CacheBleed attack, where the adversary can use
\emph{cache-bank conflicts} to make finer-grained observations and
recover the pre-computed values despite scatter/gather. For example,
the pre-computed values in Figure~\ref{fig:scatter-gather-layout} will
be distributed to different cache banks as shown in
Figure~\ref{fig:precomp-banks}, and cache-bank adversaries can
distinguish between accesses to $p_0,\dots,p_3$ and $p_4,\dots,p_7$.
\end{asparaitem}

\begin{figure}[h]
 \centering
 \begin{minipage}{.7\linewidth}
 \begin{lstlisting}[mathescape]
defensive_gather( r, buf, k ):
    for i := 0 to N-1 do
        r[i] := 0
        for j:= 0 to spacing - 1 do
            v := buf[j + i*spacing]
            s := (k == j)
            r[i] := r[i] | (v & (0 - s))
\end{lstlisting}

\end{minipage}
 \vspace{-1em}
\caption{A defensive implementation of gather (compare to 
Figure~\ref{fig:CM-openssl}) from 
OpenSSL~1.0.2g.}\label{fig:CM-gather-const}
\end{figure}
%\gorancomment{This is the code for windows $\leq$ 3 bits. For bigger, 
%essentially the same thing happens, but the order of accessing the elements is 
%a 
%bit more strange - essentially r[0], r[4], r[8],...; r[1], r[5], r[9], ... -- 
%don't know why.}

\begin{figure}[h]
	\centering
	\includegraphics[width=.4\textwidth]{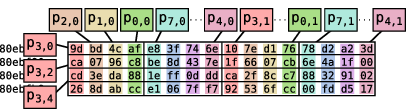}
	\vspace{-1em}
	\caption{Layout of pre-computed values in cache banks, for a platform with 
		$16$ banks of $4$ bytes. The shown data fits into one memory block, and 
		the cells of the grid represent the cache banks.}\label{fig:precomp-banks}
\end{figure}

\paragraph{Results} 
Our analysis of the different versions of the table lookup yields the
following results\footnote{We note that sliding-window exponentiation
  exhibits further control-flow vulnerabilities, some of which we also
  analyze. To avoid redundancy with Section~\ref{sec:squaremult}, we
  focus the presentation of our results on the lookup-table
  management.}:
  
  \begin{figure}[h]\small
\begin{subfigure}{.5\textwidth}
\centering
\begin{tabular}[h]{r|ccc}
 Observer  & \addrtrace & \blocktrace & \weaktrace \\\hline
 I-Cache    &  $1$ bit & $1$ bit & $1$ bit \\
 D-Cache   & $5.6$ bit & $2.3$ bit & $2.3$ bit \\
\end{tabular}
 \vspace{-.5em}
\caption{Leakage of 
secret-dependent table lookup in the modular exponentiation 
implementation from libgcrypt~1.6.1.}\label{fig:res-1.6.1-O2}
\end{subfigure}
\begin{subfigure}{.5\textwidth}
\centering
\begin{tabular}[h]{r|ccc}
 Observer  & \addrtrace & \blocktrace   & \weaktrace \\\hline
 I-Cache   & $0$ bit & $0$ bit & $0$ bit \\
 D-Cache   & $0$ bit & $0$ bit & $0$ bit \\
\end{tabular}
 \vspace{-.5em}
\caption{Leakage in the patch from 
libgcrypt~1.6.3.}\label{fig:res-1.6.3}
\end{subfigure}
\begin{subfigure}{.5\textwidth}
\centering
\begin{tabular}[h]{r|ccc }
 Observer  & \addrtrace & \blocktrace   & \weaktrace \\\hline
 I-Cache   & $0$ bit & $0$ bit & $0$ bit \\
 D-Cache   & $1152$  bit & $0$ bit & $0$ bit \\
\end{tabular}
\vspace{-.5em}
\caption{Leakage in the scatter/gather technique, 
applied to libgcrypt~1.6.1.}\label{fig:res-1.6.1-CM}
\end{subfigure}
\begin{subfigure}{.5\textwidth}
\centering
\begin{tabular}[h]{r|ccc }
 Observer  & \addrtrace & \blocktrace   & \weaktrace \\\hline
 I-Cache   & $0$ bit & $0$ bit & $0$ bit \\
 D-Cache   & $0$  bit & $0$ bit & $0$ bit \\
\end{tabular}
\vspace{-.5em}
\caption{Leakage in the defensive gather 
technique from OpenSSL~1.0.2g, 
applied to libgcrypt~1.6.1.}\label{fig:res-const-gather}
\end{subfigure}
 \vspace{-1em}
\caption{Instruction and data cache leaks of different
  table lookup implementations. Note that the leakage in
  Figure~\ref{fig:res-1.6.1-O2} accounts for copying a pointer, whereas
  the leakage in Figure~\ref{fig:res-1.6.3} and~\ref{fig:res-1.6.1-CM}
  refers to copying multi-precision integers.}\label{fig:lookup}
\end{figure}

\begin{figure}[h]
\centering
\begin{subfigure}[b]{.45\textwidth}\centering
\includegraphics[width=.8\textwidth]{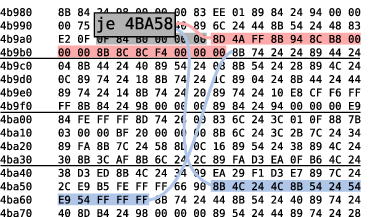}
%\vspace{-.5em}
\caption{Compiled with the default gcc optimization level -O2. If the jump is 
taken, first block \texttt{4b980}, followed by block \texttt{4ba40}, followed 
by \texttt{4b980} again. If the branch is not taken, only block \texttt{4b980} 
is accessed. }\label{subfig:1.6.1-O2}
\end{subfigure}
\begin{subfigure}[b]{.45\textwidth}\centering
\includegraphics[width=.8\textwidth]{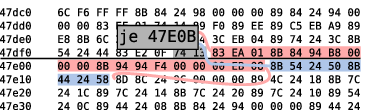}
%\vspace{-.5em}
\caption{Compiled with gcc optimization level -O1. Regardless whether the jump 
is taken or not, first 
block \texttt{47dc0} is accessed, followed by block 
\texttt{47e00}.}\label{subfig:1.6.1-O1}
\end{subfigure}
% \vspace{-1em}
\caption{Layout of executables using libgcrypt~1.6.1. The highlighted code 
corresponds to a conditional branch (blue: if-branch, red: else-branch).
Curves represent jump targets.}
\end{figure}\label{fig:1.6.1-compiler}

\begin{asparaitem}
\item Figure~\ref{fig:res-1.6.1-O2} shows the results of the analysis
  of the unprotected table lookup in Figure~\ref{fig:1.6.1-src}. The
  leakage of one bit for most adversaries is explained by the fact
  that they can observe which branch is taken. The 
layout of the conditional branch is demonstrated in 
Figure~\ref{subfig:1.6.1-O2}; lowering the optimization level results in a 
different layout (see Figure~\ref{subfig:1.6.1-O1}), and in this case our 
analysis shows that the I-Cache \weaktrace-leak is eliminated.
\item Figure~\ref{fig:res-1.6.1-O2} also shows that more powerful
  adversaries that can see the exact address can learn $\log_2 7=2.8$
  bits per access. The static analysis is not precise enough to
  determine that the lookups are correlated, hence it reports that at
  most $5.6$ bits are leaked.  
\item Figure~\ref{fig:res-1.6.3} shows that the defensive copying
  strategy from libgcrypt~1.6.3 (see
  Figure~\ref{fig:CM-1.6.3}) eliminates all leakage to the cache.
\item Figure~\ref{fig:res-1.6.1-CM} shows that the scatter/gather
  copying-strategy eliminates leakage for any adversary that can
  observe memory accesses at the granularity of memory blocks, and
  this constitutes the first proof of security of this
  countermeasure. For adversaries that can see the full address-trace,
  our analysis reports a $3$-bit leakage for each memory access, which
  is again accumulated over correlated lookups because of imprecisions
  in the static analysis. 
\item Our analysis is able to detect the leak leading to the CacheBleed
attack~\cite{yarom16cachebleed} against 
scatter/gather. The leak 
is visible when comparing the results 
of 
the analysis in Figure~\ref{fig:res-1.6.1-CM} with respect to
address-trace and block-trace adversaries, however, its severity may
be over-estimated due to the powerful address-trace observer.
For a more accurate analysis of this threat,
we repeat the analysis for the bank-trace D-cache
observer. The analysis results in 384-bit leak, which corresponds to
one bit leak per memory access, accumulated for each accessed byte due
to analysis imprecision (see above). The one-bit leak in the $i$-th memory
access is explained by the ability of this observer to distinguish
between the two banks within which the $i$-th byte of all pre-computed 
values fall.
  \item Figure~\ref{fig:res-const-gather} shows that defensive gather 
from 
OpenSSL 1.0.2g (see Figure~\ref{fig:CM-gather-const}) eliminates all leakage to 
cache.
\end{asparaitem}

% 
% \paragraph{The CacheBleed Attack}
% 
% 
% The leak leading to the CacheBleed attack against scatter/gather is visible 
% in 
% our
% data when comparing the results of the analysis with respect to
% address-trace and block-trace adversaries, however, its severity may
% be over-estimated due to the powerful address-trace observer.
% For a more accurate analysis of this threat,
% we perform the analysis of the scatter/gather implementation (see
% Figure~\ref{fig:res-1.6.1-CM}) for the bank-trace D-cache
% observer. The analysis results in 384-bit leak, which corresponds to
% one bit leak per memory access, accumulated for each accessed byte due
% to analysis imprecision (see above). The one bit leak in the $i$-th memory
% access is explained by the ability of this observer to distinguish
% between the two banks within which the $i$-th byte of all pre-computed 
% values fall.

\subsection{Discussion}

A number of comments are in order when interpreting the bounds
delivered by our analysis.

\paragraph{Use of Upper Bounds}
The results we obtain are upper bounds on the leaked information.
Results of zero leakage present a proof of the
absence of leaks.
Positive leakage bounds, however, are not necessarily tight and
do not correspond to proofs
of the presence of leaks.
The reason for this is that the amount of
leaked information may be over-estimated due to imprecision of the
static analysis, as is the case with the D-Cache leak shown on
Figure~\ref{fig:res-1.6.1-CM}.

\paragraph{Use of Imperfect Models}
The guarantees we deliver are only valid to the extent to which the
models used accurately capture the aspects of the execution platform
relevant to known attacks.  A recent empirical study of OS-level side
channels on different platforms~\cite{Cock:2014} shows that advanced
microarchitectural features may interfere with the cache, which can
render countermeasures ineffective.

\paragraph{Alternative Attack Targets}
In our analysis, we assume that heap addresses returned by malloc are
low values. For analyzing scenarios in which the heap addresses
themselves are the target of cache attacks (e.g., when the goal is to
reduce the entropy of ASLR~\cite{hund2013practical}), heap addresses
would need to be modeled as high data.

\ifext

\begin{figure*}[h]\centering\footnotesize
\begin{subfigure}{.6\textwidth}
\centering
\begin{tabular}{r|cc|cccc}
algorithm & \multicolumn{2}{c|}{square and multiply} & 
\multicolumn{4}{c}{sliding window} \\\hline
countermeasure %\multirow{2}{*}{countermeasure (CM)} 
 &\multirow{2}{*}{no CM} & always &\multirow{2}{*}{no CM} & scatter/& 
access all& defensive\\
(CM)&       & multiply &       & gather     & 
 bytes   & gather       \\\hline
\multirow{2}{*}{implementation} & libgcrypt & libgcrypt & libgcrypt & 
openssl & libgcrypt & openssl \\
& 1.5.2 & 1.5.3 & 1.6.1 & 1.0.2f & 
1.6.3 & 1.0.2g \\\hline
instructions $\times 10^6$ & $90.32$ & $120.62$ & 
$73.99$ & $74.21$ &  $74.61$ & $75.29$ \\
cycles $\times 10^6$ 
& $75.58$ & $100.73$ & $61.58$ & $61.65$ & $62.20$ & $62.28$\\
\end{tabular}
 \vspace{-.5em}
\caption{Different versions of 
modular exponentiation.  
}\label{fig:performance}
\end{subfigure}\qquad
\begin{subfigure}{.33\textwidth}\centering
\begin{tabular}{r|ccc}
algorithm & 
\multicolumn{3}{c}{sliding window} \\\hline
\multirow{2}{*}{countermeasure} 
 & scatter/& access all & defensive\\
 & gather  &  bytes     & gather       \\\hline
\multirow{2}{*}{implementation} & 
openssl & libgcrypt & openssl \\
             & 1.0.2f & 1.6.3  & 1.0.2g \\\hline
instructions & $2991$ & $8618$ & $13040$
\\
cycles       & $859$ & $3073$ & $5579$ \\

\end{tabular}
 \vspace{-.5em}
\caption{Only multi-precision-integer 
retrieval step.}\label{fig:performance-2}
\end{subfigure}
 \vspace{-1em}
\caption{Performance measurements for libgcrypt~1.6.3 ElGamal decryption, for 
3072-bit keys.}\label{fig:performance-all}
\end{figure*}

% \TODO{Reviewer: would be nice to have more explicit interpretation of the 
% results}

\subsection{Performance of Countermeasures}
We conclude the case study by considering the effect of the different 
countermeasures on the performance of 
modular exponentiation. For the target implementations (see Section~\ref{sec:target}),
we measure performance as the clock count (through the \texttt{rdtsc} 
instruction),
 as well as the number of performed instructions (through the 
\texttt{PAPI} library~\cite{mucci1999papi}), for performing exponentiations, 
for a sample of random bases and exponents. We make 100,000 measurements with 
an Intel~Q9550 CPU.

 Figure~\ref{fig:performance} summarizes our measurements. The results show 
that the applied countermeasure for square and multiply causes a significant 
slow-down of the exponentiation. A smaller slow-down is observed with 
sliding-window countermeasures as well; this slow-down is demonstrated in 
Figure~\ref{fig:performance-2}, which 
shows the performance of the retrieval of pre-computed values, with 
different countermeasures applied.

% \footnote{We note that performance penalties of countermeasures can be 
% higher when considering a whole cryptographic operation, different platforms, 
% implementations within other crypto libraries, different key sizes. For 
% example, 
% in commit messages prior to OpenSSL~1.0.2g's release, developers report 
% performance penalties of up to 10\% for 2048-bit RSA on some platforms. }

\fi

\section{Related Work}\label{sec:related}
We begin by discussing approaches that tackle related goals, before we
discuss approaches that rely on similar techniques.

\paragraph{Transforming out Timing Leaks}
Agat proposes a program transformation for removing control-flow
timing leaks by equalizing branches of conditionals with secret
guards~\cite{inp:Agat2000a}, which he complements with an informal
discussion of the effect of instruction and data caches in Java byte
code~\cite{tr:agat2000}. 
%based on which we perform an automatic analysis.
Molnar et al.~\cite{molnar2006program} propose a program
transformation that eliminates control-flow timing leaks, together
with a static check for the resulting x86 executables. Coppens et
al.~\cite{CoppensVBS09} propose a similar transformation and evaluate
its practicality. The definitions in Section~\ref{sec:security}
encompass the adversary model of~\cite{molnar2006program}, but also
weaker ones; they could be used as a target for program
transformations that allow for limited forms for secret-dependent
behavior.

\paragraph{Constant-time Software}

Constant-time code defeats timing attacks by ensuring that control
flow, memory accesses, and execution time of individual instructions
do not depend on secret data. Constant-time code is the current gold
standard for defensive implementations of cryptographic
algorithms~\cite{BernsteinLS12}.

A number of program analyses support verifying constant-time
implementations. Almeida et al.~\cite{AlmeidaBPV13} develop an
approach based on self-composition that checks absence of timing leaks
in C-code; Almeida et al.~\cite{almeida2016verifying} provide a tool
chain for verifying constaint-time properties of LLVM IR code. Similar
to the dynamic analysis by Langley~\cite{langley-valgrind}, our
approach targets executable code, thereby avoiding potential leaks
introduced by the compiler~\cite{kaufmann2016constant}. Moreover, it
supports verification of more permissive interactions between software
and hardware -- at the price of stronger assumptions about the
underlying hardware platform.

%Langley~\cite{langley-valgrind} performs a dynamic analysis of
%executable code based on Valgrind. This technique flags the
%secret-dependent (but cache-line independent) memory lookups of the
%OpenSSL window-based modular exponentiation as a potential
%leak. Our technique gives more fine-grained insights about its
%security, including proofs of security for adversaries that can
%observe memory accesses only at the granularity of memory blocks.

\paragraph{Quantitative Information Flow Analysis}
Technically, our work draws on methods from quantitative
information-flow analysis (QIF)~\cite{ClarkHM07}, where the automation
by reduction to counting problems appears
in~\cite{backeskoepfrybal09,newsome09}, and has subsequently been
refined in several
dimensions~\cite{HeusserM10,koepfrybal10,Klebanov12a,malacaria16}.

Specifically, our work builds on
CacheAudit~\cite{cacheaudit-tissec15}, a tool for the static
quantification of cache side channels in x86 executables. The
techniques developed in this paper extend CacheAudit with support for
precise reasoning about dynamically allocated memory, and a rich set of
novel adversary models. Together, this enables the first formal
analysis of widely deployed countermeasures, such as scatter/gather.

\paragraph{Abstract Interpretation}
We rely on basic notions from abstract
interpretation~\cite{cousot:cousot77} for establishing the soundness
of our analysis. However, the connections run deeper: For example, the
observers we define (including the stuttering
variants~\cite{gm05formats}) can be seen as abstractions in the
classic sense, which enables composition of their views in algebraic
ways~\cite{CousotC79}.  Abstract interpretation has also been used for
analyzing information flow properties~\cite{GM04,AssafNSTT17}. Reuse
of the machinery developed in these papers could help streamline our
reasoning. We leave a thorough exploration of this connection to
future work.

\section{Conclusions}\label{sec:conclusions}
In this paper we devise novel techniques that provide support for
bit-level and arithmetic reasoning about pointers in the presence of
dynamic memory allocation.  These techniques enable us to perform the
first rigorous analysis of widely deployed software countermeasures
against cache side-channel attacks on modular exponentiation, based on
executable code.

\paragraph{Acknowledgments} We thank Roberto Giacobazzi, Francesco
Logozzo, Laurent Mauborgne, and the anonymous reviewers for their
helpful feedback. This work was supported by Ram{\'o}n y Cajal grant
RYC-2014-16766, Spanish projects TIN2012-39391-C04-01 StrongSoft and
TIN2015-70713-R DEDETIS, and Madrid regional project S2013/ICE-2731
N-GREENS.

\bibliographystyle{abbrv}
\bibliography{leakmc}

% \appendix
% 
%\input{appendix}

\end{document}
